\documentclass[11pt,a4paper]{article}

\usepackage[utf8]{inputenc}
\usepackage[T1]{fontenc}
\usepackage{lmodern}

\usepackage{amsmath,amssymb,amsthm}
\usepackage{bm}
\usepackage{mathtools}

\usepackage{booktabs}
\usepackage{array}
\usepackage{float}          %

\usepackage{tikz}
\usepackage{pgfplots}
\pgfplotsset{compat=1.18}
\usetikzlibrary{arrows.meta, patterns, calc, decorations.markings, decorations.pathreplacing, positioning}
\usepgfplotslibrary{fillbetween}

\usepackage[margin=2.5cm]{geometry}
\usepackage{setspace}
\usepackage{parskip}

\usepackage[authoryear,round]{natbib}

\usepackage[colorlinks=true,linkcolor=blue,citecolor=blue,urlcolor=blue]{hyperref}
\usepackage{cleveref}

\newtheorem{theorem}{Theorem}
\newtheorem{proposition}{Proposition}
\newtheorem{conjecture}{Conjecture}
\newtheorem{corollary}{Corollary}

\theoremstyle{definition}
\newtheorem{definition}{Definition}
\newtheorem{example}{Example}
\theoremstyle{remark}
\newtheorem*{remark}{Remark}

\newcommand{\dd}{\mathrm{d}}

\newcommand{\tw}{\tau_w}

\newcommand{\V}{\mathbf{V}}
\newcommand{\one}{\mathbf{1}}

\title{Spectral Portfolio Theory: From SGD Weight Matrices to Wealth Dynamics}
\author{Anders G Fr{\o}seth\thanks{Independent Researcher.
  E-mail: \href{mailto:indrefjorden@pm.me}{indrefjorden@pm.me}.}}
\date{\today}

\begin{document}
\maketitle

\begin{abstract}
We develop spectral portfolio theory by establishing a direct identification: neural network weight matrices trained on stochastic processes are portfolio allocation matrices, and their spectral structure encodes factor decompositions and wealth concentration patterns.
The three forces governing stochastic gradient descent (SGD)---gradient signal, dimensional regularisation, and eigenvalue repulsion---translate directly into portfolio dynamics: smart money, survival constraint, and endogenous diversification.
The spectral properties of SGD weight matrices transition from Marchenko--Pastur statistics (additive regime, short horizon) to inverse-Wishart via the free log-normal (multiplicative regime, long horizon), mirroring the transition from daily returns to long-run wealth compounding.
We unify the cross-sectional wealth dynamics of \citet{BouchaudMezard2000}, the within-portfolio dynamics of \citet{OlsenEtAl2025}, and the scalar Fokker--Planck framework via a common spectral foundation.
A central result is the Spectral Invariance Theorem: any isotropic perturbation to the portfolio objective preserves the singular-value distribution up to scale and shift, while anisotropic perturbations produce spectral distortion proportional to their cross-asset variance.
We develop applications to portfolio design, wealth inequality measurement, tax policy, and neural network diagnostics.
In the tax context, the invariance result recovers and generalises the neutrality conditions of \citet{Froeseth2026N}.
\end{abstract}

\section{Introduction}\label{sec:intro}

Consider a feedforward neural network trained to learn a stochastic process---a drift function, a transition density, or a score function---from observed trajectories.
At each layer, the weight matrix $W \in \mathbb{R}^{m \times n}$ evolves under stochastic gradient descent.

This paper takes the following identification as its foundation: the weight matrix at a given layer is a portfolio allocation matrix.
Row $i$ specifies how capital is distributed across $n$ assets in state $i$, and columns index assets.
Training the network on data from a stochastic process is equivalent to running an adaptive portfolio optimizer on that process.

The spectral theory of SGD dynamics developed by \citet{OlsenEtAl2025} reveals that this identification is not merely an analogy: the stationary portfolio structure, the factor decomposition, and the concentration--diversification tradeoff emerge as direct mathematical consequences.
This paper develops the framework systematically, proving a spectral invariance theorem for isotropic perturbations and applying the results to portfolio design, wealth inequality, tax policy, and neural network diagnostics.

The main contributions are as follows.
First, we establish that the three forces in the singular-value evolution equation have direct portfolio interpretations: gradient signal encodes smart money (return-seeking), dimensional regularisation captures an endogenous survival constraint, and eigenvalue repulsion implements endogenous diversification without explicit constraints.
Second, we characterize the stationary spectral distribution as having a gamma-type bulk with power-law tail, giving rise to the core--satellite portfolio structure observed empirically in institutional wealth and household portfolio data.
Third, we show that the spectral transition from additive (Marchenko--Pastur) to multiplicative (inverse-Wishart) regimes is governed by the free log-normal distribution and the matrix Kesten problem, establishing a precise mathematical duality between short-horizon and long-horizon portfolio dynamics.
Fourth, we unify the cross-sectional wealth model of Bouchaud and Mézard, the within-portfolio model of Olsen et al., and the scalar Fokker--Planck framework via spectral decomposition.
Fifth, we prove the Spectral Invariance Theorem: any isotropic perturbation to the portfolio objective preserves the spectral shape of the allocation matrix, while anisotropic perturbations produce distortion proportional to their cross-asset variance.
Sixth, we develop applications to portfolio design, wealth inequality measurement, tax policy, and neural network diagnostics, demonstrating the breadth of the framework.

The paper is organized in five parts.
Part~I (Sections~\ref{sec:setup}--\ref{sec:learning}) establishes the foundations: the learning setup (Figure~\ref{fig:learning-setup}), the weight matrix--portfolio identification, the SVD decomposition into eigenportfolios, the three forces of SGD, the stationary spectral distribution, and factor complexity.
Part~II (Sections~\ref{sec:regimes}--\ref{sec:fp-connection}) develops the dynamics: timescale regimes governing the additive-to-multiplicative transition, the ergodicity gap as a spectral observable, the aggregation from matrix to scalar via Itô projection, and the Fokker--Planck connection.
Part~III (Sections~\ref{sec:bouchaud}--\ref{sec:loss}) presents general results: the Bouchaud--Mézard unification, the Spectral Invariance Theorem and its anisotropic counterpart, the Spectral Invariance Conjecture, and the loss--utility correspondence.
Part~IV (Section~\ref{sec:applications}) develops applications to portfolio design, wealth inequality, tax policy, and neural network diagnostics.
Part~V (Sections~\ref{sec:predictions}--\ref{sec:conclusion}) collects testable predictions, open questions, literature context, and conclusions.

\section{Setup: Weight Matrices as Allocation Matrices}\label{sec:setup}

The starting point is a stochastic process observed through discrete trajectories.
The process follows a stochastic differential equation
\begin{equation}\label{eq:data-sde}
  dx = v(x)\,dt + \sqrt{2D}\,dB,
\end{equation}
where $v(x)$ is the drift function and $D$ the diffusion constant.
A neural network is trained to learn either the drift $v(x)$, the transition density $p(x_{t+\Delta t} \mid x_t)$, or the score function $\nabla_x \log p(x)$ from observed trajectory data $\{x_0, x_{\Delta t}, x_{2\Delta t}, \ldots\}$.
The weight matrix $W$ at each layer is updated by stochastic gradient descent on a loss function $\mathcal{L}(W)$ that is quadratic in the prediction residual (see Section~\ref{sec:loss} for details).

\begin{figure}[H]
\centering
\begin{tikzpicture}[
    >=Stealth,
    lbl/.style={font=\scriptsize, text=black!60},
    arrow/.style={->, thick, shorten >=2pt, shorten <=2pt},
  ]

  \node[font=\small\bfseries] at (-4.5, 3.2) {Data: trajectories};

  \begin{scope}[xshift=-4.5cm, yshift=0cm]
    \draw[->, thick, black!50] (-1.8, -0.3) -- (1.8, -0.3) node[right, lbl] {$t$};
    \draw[->, thick, black!50] (-1.8, -0.3) -- (-1.8, 2.8) node[above, lbl] {$x(t)$};

    \draw[thick, blue!60, smooth] plot coordinates
      {(-1.5,0.0) (-1.0,0.3) (-0.5,0.6) (0.0,0.5) (0.5,1.0) (1.0,1.4) (1.5,1.6)};
    \draw[thick, red!50, smooth] plot coordinates
      {(-1.5,0.2) (-1.0,0.6) (-0.5,1.1) (0.0,1.3) (0.5,1.2) (1.0,1.8) (1.5,2.3)};
    \draw[thick, green!50!black, smooth] plot coordinates
      {(-1.5,0.1) (-1.0,0.0) (-0.5,0.3) (0.0,0.8) (0.5,0.7) (1.0,0.6) (1.5,1.0)};

    \foreach \x/\yb/\yr/\yg in {-1.5/0.0/0.2/0.1, -0.5/0.6/1.1/0.3, 0.5/1.0/1.2/0.7, 1.5/1.6/2.3/1.0} {
      \fill[blue!60] (\x, \yb) circle (1.5pt);
      \fill[red!50] (\x, \yr) circle (1.5pt);
      \fill[green!50!black] (\x, \yg) circle (1.5pt);
    }

    \node[font=\scriptsize, text=black!70, align=center] at (0, -0.9)
      {$dx = v(x)\,dt + \sqrt{2D}\,dB$};
  \end{scope}

  \draw[arrow, thick, black!50] (-2.2, 1.2) -- (-0.8, 1.2)
    node[midway, above=2pt, lbl] {$\{x_t, x_{t+\Delta t}\}$};

  \node[font=\small\bfseries] at (1.2, 3.2) {Neural network};

  \begin{scope}[xshift=1.2cm, yshift=1.2cm]
    \node[draw, circle, minimum size=0.45cm, thick, fill=gray!10] (i1) at (-1.2, 0.6) {};
    \node[draw, circle, minimum size=0.45cm, thick, fill=gray!10] (i2) at (-1.2, -0.6) {};
    \node[lbl] at (-1.2, -1.15) {$x_t$};

    \node[draw, circle, minimum size=0.45cm, thick, fill=blue!10] (h1) at (0, 1.0) {};
    \node[draw, circle, minimum size=0.45cm, thick, fill=blue!10] (h2) at (0, 0.2) {};
    \node[draw, circle, minimum size=0.45cm, thick, fill=blue!10] (h3) at (0, -0.6) {};

    \node[draw, rounded corners=2pt, fill=blue!6, font=\scriptsize, thick,
          minimum width=0.9cm, minimum height=1.8cm] at (0, 0.2) {};
    \node[font=\tiny, blue!70] at (0, 0.2) {$W$};

    \node[draw, circle, minimum size=0.45cm, thick, fill=orange!15] (o1) at (1.2, 0.0) {};
    \node[lbl] at (1.2, -1.15) {$\hat{v}(x_t; W)$};

    \foreach \i in {1,2} {
      \foreach \h in {1,2,3} {
        \draw[black!30, thin] (i\i) -- (h\h);
      }
    }
    \foreach \h in {1,2,3} {
      \draw[black!30, thin] (h\h) -- (o1);
    }
  \end{scope}

  \draw[arrow, thick, black!50] (2.9, 1.2) -- (4.1, 1.2)
    node[midway, above=2pt, lbl] {prediction};

  \node[font=\small\bfseries] at (5.8, 3.2) {Loss function};

  \begin{scope}[xshift=5.8cm, yshift=1.2cm]
    \node[draw, rounded corners=5pt, thick, fill=yellow!8, font=\scriptsize,
          text width=3.0cm, align=center, minimum height=2.6cm] at (0, 0)
      {\textbf{MLE:}\\[3pt]
       $\displaystyle\frac{1}{T}\sum_t\Bigl\|\frac{\Delta x_t}{\Delta t} - \hat{v}\Bigr\|^2$\\[8pt]
       \textbf{Score matching:}\\[3pt]
       $\frac{1}{2}\|s_\theta - \nabla\!\log p\|^2$};
  \end{scope}

  \node[lbl, text width=3.0cm, align=center] at (5.8, -0.45)
    {both quadratic in residual};

  \node[font=\scriptsize, text=black!50] (sgdlabel) at (3.5, -1.15)
    {SGD: $\;W \leftarrow W - \eta\,\nabla_W\mathcal{L} + \text{noise}$};
  \draw[->, thick, black!40, rounded corners=3pt]
    (5.8, -1.15) -- (5.8, -1.7) -- (1.2, -1.7) -- (1.2, -0.6);

\end{tikzpicture}
\caption{The learning setup.
A stochastic process~\eqref{eq:data-sde} generates trajectory data (left).
A neural network with weight matrix $W$ at each layer learns the drift function $\hat{v}(x; W)$ or the score function $s_\theta(x)$ (centre).
The loss function is quadratic in the prediction residual (right), whether using maximum likelihood or score matching.
SGD updates $W$ iteratively, and the stationary spectral distribution of $W$ encodes the structure of the learned process.}
\label{fig:learning-setup}
\end{figure}
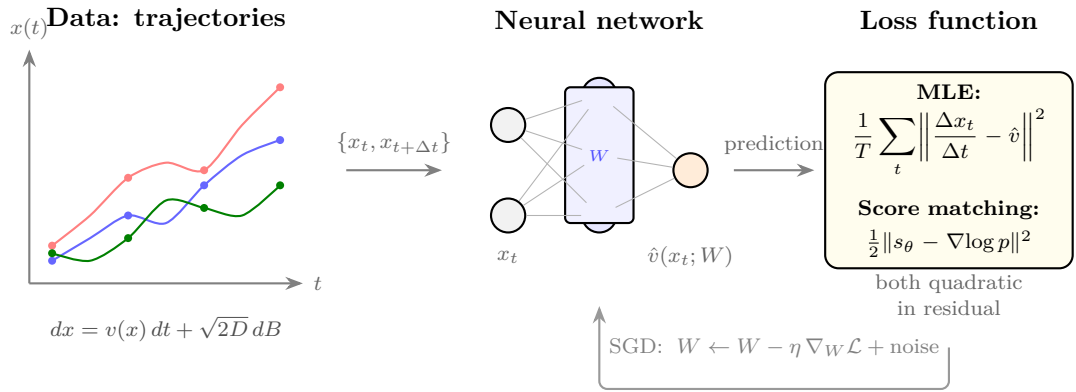

In \citet{OlsenEtAl2025}, the weight matrix $W \in \mathbb{R}^{m \times n}$ evolves under the stochastic differential equation
\begin{equation}\label{eq:olsen-sde}
  dW = -\eta \frac{\partial \mathcal{L}}{\partial W}\,dt + \sqrt{2\eta D}\,d\mathcal{W},
\end{equation}
where $\eta$ is the learning rate, $D$ an effective diffusion constant, and $d\mathcal{W}$ is matrix-valued Wiener noise.

We relabel the dimensions: set $m = T$ (time periods or states of the world) and $n = N$ (number of assets).
Then $W_{ti}$ represents the portfolio weight on asset $i$ in state $t$.
The allocation matrix $W \in \mathbb{R}^{T \times N}$ evolves as investors rebalance in response to return signals, information noise, and constraints implicit in the learning dynamics.

The singular value decomposition $W = U\Sigma V^\top$ decomposes this allocation into principal components.
The left singular vectors $u_k \in \mathbb{R}^T$ show when each factor is relevant (temporal patterns).
The right singular vectors $v_k \in \mathbb{R}^N$ show which assets compose each factor (eigenportfolios).
The singular values $\sigma_k$ measure the magnitude of allocation to each eigenportfolio.

This decomposition reveals the factor structure naturally: large singular values correspond to important factors (market-wide moves, systematic risks); small singular values correspond to idiosyncratic or second-order effects.
The right singular vectors $v_k$ define eigenportfolio compositions, and the left singular vectors $u_k$ indicate when each factor dominates.
Figure~\ref{fig:identification} illustrates this relabelling.

\begin{figure}[H]
\centering
\begin{tikzpicture}[
    >=Stealth,
    node distance=1.2cm,
    matbox/.style={draw, rounded corners=3pt, minimum width=2.6cm, minimum height=1.2cm,
                   font=\small, align=center, thick, fill=blue!6},
    svdbox/.style={draw, rounded corners=3pt, minimum width=1.3cm, minimum height=0.7cm,
                   font=\small, align=center, thick},
    lbl/.style={font=\scriptsize, text=black!60},
    arrow/.style={->, thick, shorten >=2pt, shorten <=2pt},
  ]

  \node[lbl, font=\scriptsize\bfseries] at (-4.8, 2.8) {Neural Network Layer};

  \foreach \i/\y in {1/2.0, 2/1.4, 3/0.8} {
    \node[draw, circle, minimum size=0.5cm, thick, fill=gray!10] (in\i) at (-4.8, \y) {};
  }
  \node[lbl] at (-4.8, 0.3) {$x_t \in \mathbb{R}^m$};

  \node[matbox] (W) at (-2.0, 1.4) {$W \in \mathbb{R}^{m \times n}$\\[1pt]\scriptsize weight matrix};

  \foreach \i/\y in {1/2.3, 2/1.7, 3/1.1, 4/0.5} {
    \node[draw, circle, minimum size=0.5cm, thick, fill=gray!10] (out\i) at (0.8, \y) {};
  }
  \node[lbl] at (0.8, 0.0) {$y_t = W^\top\! x_t$};

  \foreach \i in {1,2,3} {
    \draw[arrow, black!40] (in\i) -- (W.west |- in\i);
  }
  \foreach \i in {1,2,3,4} {
    \draw[arrow, black!40] (W.east |- out\i) -- (out\i);
  }

  \draw[<->, very thick, blue!50] (-2.0, -0.1) -- (-2.0, -0.9)
    node[midway, right=3pt, font=\footnotesize\bfseries, blue!60] {$\equiv$};

  \node[lbl, font=\scriptsize\bfseries] at (-4.8, -1.2) {Portfolio Allocation};

  \foreach \i/\y/\lab in {1/-1.7/t{=}1, 2/-2.3/t{=}2, 3/-2.9/t{=}T} {
    \node[draw, rounded corners=2pt, minimum width=0.7cm, minimum height=0.4cm,
          thick, fill=orange!10, font=\scriptsize] (st\i) at (-4.8, \y) {$\lab$};
  }
  \node[lbl] at (-4.8, -3.35) {states};

  \node[matbox, fill=orange!6] (A) at (-2.0, -2.3) {$W_{ti}$\\[1pt]\scriptsize allocation matrix};

  \foreach \i/\y/\lab in {1/-1.4/{i{=}1}, 2/-2.0/{i{=}2}, 3/-2.6/{i{=}3}, 4/-3.2/{i{=}N}} {
    \node[draw, rounded corners=2pt, minimum width=0.7cm, minimum height=0.4cm,
          thick, fill=orange!10, font=\scriptsize] (as\i) at (0.8, \y) {$\lab$};
  }
  \node[lbl] at (0.8, -3.65) {assets};

  \foreach \i in {1,2,3} {
    \draw[arrow, black!40] (st\i) -- (A.west |- st\i);
  }
  \foreach \i in {1,2,3,4} {
    \draw[arrow, black!40] (A.east |- as\i) -- (as\i);
  }

  \node[lbl, font=\scriptsize\bfseries] at (4.8, 2.8) {SVD: $W = U\Sigma V^\top$};

  \node[svdbox, fill=green!8] (U) at (3.2, 1.8) {$U$};
  \node[svdbox, fill=yellow!12] (S) at (4.8, 1.8) {$\Sigma$};
  \node[svdbox, fill=red!8] (V) at (6.4, 1.8) {$V^\top$};

  \draw[arrow, thick, blue!50] (1.7, 1.4) -- (U.west);

  \node[lbl, text width=2.2cm, align=center] at (3.2, 0.75) {$u_k \in \mathbb{R}^T$\\[1pt]temporal\\patterns};
  \node[lbl, text width=1.8cm, align=center] at (4.8, 0.75) {$\sigma_k$\\[1pt]factor\\magnitudes};
  \node[lbl, text width=2.2cm, align=center] at (6.4, 0.75) {$v_k \in \mathbb{R}^N$\\[1pt]eigenportfolio\\compositions};

  \draw[densely dotted, black!25] (U.south) -- (3.2, 1.25);
  \draw[densely dotted, black!25] (S.south) -- (4.8, 1.25);
  \draw[densely dotted, black!25] (V.south) -- (6.4, 1.25);

  \node[draw, rounded corners=3pt, thick, fill=gray!5, font=\scriptsize,
        text width=3.2cm, align=center] (sgd) at (4.8, -1.8)
    {SGD dynamics:\\[2pt]
     $dW = -\eta\nabla\mathcal{L}\,dt + \sqrt{2\eta D}\,d\mathcal{W}$\\[4pt]
     $\equiv$ adaptive portfolio\\rebalancing};

  \draw[arrow, thick, black!40] (A.east) -- (sgd.west);

  \draw[arrow, thick, black!40] (sgd.north) -- (4.8, -0.1)
    node[midway, right=2pt, lbl] {spectra};

\end{tikzpicture}
\caption{The neural network--portfolio identification.
A single layer with weight matrix $W \in \mathbb{R}^{m \times n}$ maps input $x_t$ to output $y_t = W^\top x_t$ (top left).
Relabelling rows as states and columns as assets gives the allocation matrix $W_{ti}$ (bottom left).
The SVD decomposes $W$ into temporal patterns $u_k$, factor magnitudes $\sigma_k$, and eigenportfolio compositions $v_k$ (top right).
SGD dynamics on $W$ are equivalent to adaptive portfolio rebalancing (bottom right); the stationary spectral density of $\sigma_k$ is determined by the signal-to-noise ratio $\beta_1/\eta D$.}
\label{fig:identification}
\end{figure}
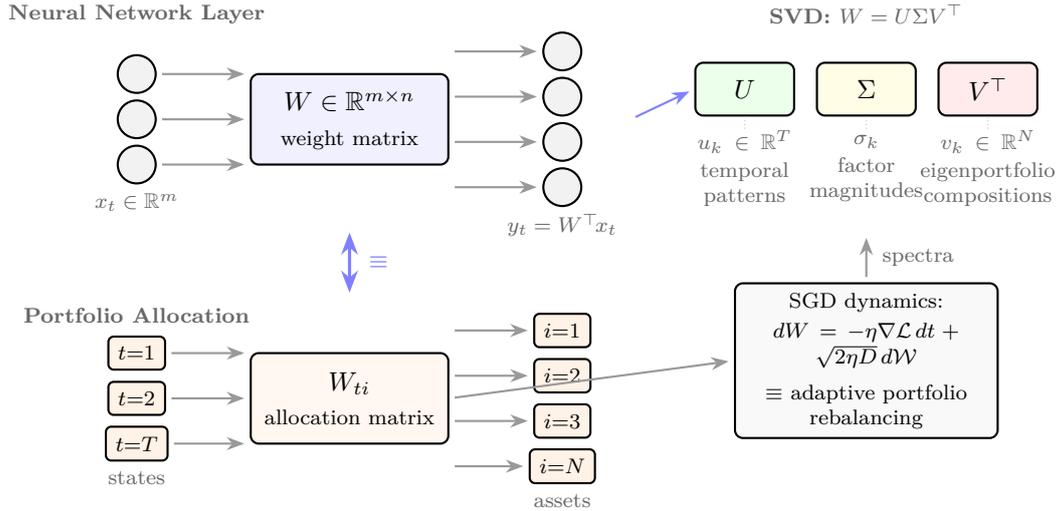

\section{Portfolio Dynamics: The Three Forces of SGD}\label{sec:forces}

By \citet{OlsenEtAl2025} (Theorem 3.1), the singular values evolve as
\begin{equation}\label{eq:sv-sde}
  d\sigma_k = \Bigl[
    -\eta\, u_k^\top(\nabla_W \mathcal{L})\,v_k
    + \eta D\Bigl(\frac{m-n+1}{2\sigma_k}
    + \sum_{j \neq k}\frac{\sigma_k}{\sigma_k^2 - \sigma_j^2}\Bigr)
  \Bigr]dt
  + \sqrt{2\eta D}\,d\beta_k.
\end{equation}

This equation contains three distinct forces, each with a clear portfolio interpretation.

\subsection{Gradient Signal: Smart Money}\label{subsec:signal}

The term
\begin{equation}
  -\eta\, u_k^\top(\nabla_W \mathcal{L})\,v_k
\end{equation}
is the projected return signal pushing capital toward high-performing eigenportfolios.
This is the optimizer's best estimate of where marginal utility of wealth is highest.
In a mean--variance framework with loss $\mathcal{L} = -Wr + \tfrac{\gamma}{2}W^2$, this term becomes the excess return on factor $k$ adjusted for risk aversion.
The negative sign reflects gradient descent: the optimizer increases weights on factors with positive marginal returns.

This force implements \textbf{smart money dynamics}: capital flows toward factors with superior risk-adjusted returns.

\subsection{Dimensional Regularization: Survival Constraint}\label{subsec:survival}

The term
\begin{equation}
  \eta D\,\frac{m - n + 1}{2\sigma_k}
\end{equation}
is a $1/\sigma_k$ restoring force that prevents any eigenportfolio loading from reaching zero.
Small positions get pushed up.
Economically, this is a minimum-position constraint arising \emph{endogenously} from the noise structure.
An investor observing returns with noise cannot rationally set any factor exposure exactly to zero, because zero exposure means zero information---creating an explore--exploit tension.

The prefactor $m - n + 1$ is the aspect ratio of the allocation matrix, encoding how many degrees of freedom remain after fitting the primary factors.
More time periods (larger $m$) strengthen the survival constraint; more assets (larger $n$) weaken it.

This force implements \textbf{endogenous survival}: positions that would shrink to zero due to poor current performance are sustained by information and exploration incentives.

\subsection{Eigenvalue Repulsion: Endogenous Diversification}\label{subsec:repulsion}

The term
\begin{equation}
  \eta D \sum_{j \neq k}\frac{\sigma_k}{\sigma_k^2 - \sigma_j^2}
\end{equation}
prevents any two eigenportfolio loadings from becoming equal.
In portfolio language, this is \textbf{endogenous diversification}.
Even without an explicit diversification constraint, stochastic rebalancing naturally separates factor exposures.
When two factors carry similar weight ($\sigma_k \approx \sigma_j$), the denominator diverges, creating an unstable repulsion that forces differentiation.

This is a new micro-foundation for diversification: it arises not from risk aversion (which is in the gradient term), but from the \emph{noise structure of learning under partial information}.
It holds regardless of the loss function or risk aversion parameter.

\begin{remark}[Connection to Merton]
In the continuous-time portfolio problem, the optimal allocation involves the inverse covariance matrix $\Sigma^{-1}$ \citep{Merton1969,Merton1971}.
The eigenvalue repulsion ensures that the estimated covariance matrix remains well-conditioned---it is a regularisation of the Merton solution that emerges from the learning dynamics.
\end{remark}

\begin{table}[H]
\centering
\small
\begin{tabular}{@{}p{3.2cm}p{3.8cm}p{5.5cm}@{}}
\toprule
\textbf{This paper} & \textbf{Nearest existing concept} & \textbf{Key distinction} \\
\midrule
Smart money (gradient signal)
  & Informed investor flows \citep{Gruber1996}
  & Optimisation on available data, not asymmetric information; compatible with market efficiency \\[6pt]
Survival constraint (dim.\ regularisation)
  & Safety-first / Kelly criterion \citep{Roy1952}
  & Survival of information channels (factor exposures), not of the investor \\[6pt]
Endogenous diversification (eigenvalue repulsion)
  & Mean--variance diversification \citep{Markowitz1952}
  & Emerges from noise structure of learning; independent of risk aversion \\
\bottomrule
\end{tabular}
\caption{Terminology disambiguation.  The portfolio labels proposed in this paper echo but differ from established usage.  Each force in the singular-value SDE~\eqref{eq:sv-sde} has a nearest classical counterpart; the distinctions are summarised here.}
\label{tab:terminology}
\end{table}

\begin{remark}[Adaptive dynamics]\label{rem:adaptive}
Taken together, the three forces describe an explore--exploit--diversify triad: the gradient signal exploits current information, the survival constraint maintains exploratory positions, and eigenvalue repulsion enforces differentiation.
This is reminiscent of the Adaptive Markets Hypothesis \citep{Lo2004,Lo2017}, in which market behaviour is driven by evolutionary dynamics among strategies.
The spectral framework can be viewed as providing a mathematical instantiation of Lo's qualitative programme.
\end{remark}

\section{The Stationary Portfolio: Bulk and Tail}\label{sec:stationary}

By \citet{OlsenEtAl2025} (Theorem 3.2), the stationary distribution of singular values is
\begin{equation}\label{eq:stat-sv}
  p_\sigma(\sigma) \propto \sigma^{(m-n+1)/2}\,e^{-(\beta_1 / 4\eta D)\,\sigma^2},
\end{equation}
where $\beta_1$ is a mean-field restoring-force constant.

This distribution exhibits a clear structure: a concentrated bulk followed by a power-law tail.

\begin{definition}[Core--Satellite Portfolio]\label{def:core-satellite}
At stationarity, we partition the eigenportfolios into two groups:
\begin{itemize}
  \item \textbf{Bulk}: most eigenportfolio loadings cluster in a concentrated region.
    This is the diversified \textit{core}---broad market exposure spread across many factors.
  \item \textbf{Tail}: a few large $\sigma_k$ values form a power-law tail ($\sigma^{(m-n+1)/2}$ for large $\sigma$).
    These are concentrated \textit{satellite} positions---large bets on specific factors.
\end{itemize}
This core--satellite structure is observed empirically in institutional portfolios and, crucially, in the Norwegian wealth register data of \citet{Froeseth2026N,Froeseth2026E,Froeseth2026S}.
\end{definition}

The tail exponent $(m - n + 1)/2$ depends on the aspect ratio of the allocation matrix.
For a market with $N$ assets observed over $T$ periods, the exponent is $(T - N + 1)/2$.
More observation (larger $T$) fattens the tail (allows more concentration); more assets (larger $N$) thins it (forces diversification).
This is testable against Norwegian portfolio data.

\begin{definition}[Effective Spectral Rank]\label{def:eff-rank}
Define the \emph{effective spectral rank} as
\begin{equation}
  r_{\text{eff}} = \frac{\sum_k \sigma_k}{\max_k \sigma_k},
\end{equation}
the ratio of total allocation to the largest single factor exposure.
This measures the complexity of the portfolio: $r_{\text{eff}} = 1$ means one factor dominates; $r_{\text{eff}} = N$ means all factors have equal weight.
The stationary distribution predicts a specific value of $r_{\text{eff}}$ as a function of the aspect ratio and the signal-to-noise ratio.
\end{definition}

\section{Network Learning: Factor Complexity and Data Structure}\label{sec:learning}

When a weight matrix $W$ is trained on data from a stochastic process, it encodes a factor decomposition of that process.
The singular values tell us the relative importance of each factor.

Consider three cases:

\textbf{Case 1: Simple underlying structure.}
If the data is generated by a drift function with simple structure (mean-reverting, polynomial, Fourier basis), the trained weight matrix will have a small number of large singular values followed by a sharp drop.
The bulk is thin, the tail is fat---a concentrated factor model.

\textbf{Case 2: Complex or noisy data.}
If the data is noisy or has complex structure, the trained matrix will have a more gradual decay of singular values.
The bulk is thick, the tail is thinner---a more diversified representation.

\textbf{Case 3: Pareto-distributed data.}
The spectral complexity measure $r_{\text{eff}}$ directly reflects the intrinsic dimensionality of the process.
For Pareto-distributed data (as in wealth), the exponent of $r_{\text{eff}}$ should match the Pareto exponent of the data.

This means the weight spectrum is not an artifact of the learning process; it is a faithful encoding of the underlying data structure.
In deep networks with $L$ layers, each layer performs a hierarchical factor decomposition: early layers capture primary factors, intermediate layers combine them into secondary factors, and the final layer produces the output.
This is the fund-of-funds structure: not a single portfolio, but a hierarchy of portfolios.

\section{Timescale Regimes and Spectral Transitions}\label{sec:regimes}

The spectral portfolio framework must reconcile two well-established facts about asset dynamics that operate on different timescales.

\subsection{The Additive Regime: Short Horizon}\label{subsec:additive}

At timescales from minutes to days, asset returns are approximately additive: $r_t \approx \mu\,\delta t + \sigma\,\delta B_t$, with heavy tails but finite variance over short windows.
This is the regime of the classical stylized facts \citep{Cont2001, BouchaudPotters2003}: fat-tailed return distributions, volatility clustering, and approximately uncorrelated returns.
The covariance matrix $\Sigma = \mathrm{Cov}(r_t)$ has a Marchenko--Pastur bulk with a few spikes (market factor, sector factors), and random matrix theory cleaning \citep{LalouxEtAl1999, LedoitWolf2004} is effective precisely because the bulk is well described by random matrix universality.

This is also the regime where \citet{OlsenEtAl2025}'s spectral theory of SGD applies most directly: weight updates are small additive perturbations, the matrix-valued SDE is well approximated by Dyson Brownian motion, and the stationary spectral density follows a gamma-type distribution with power-law tails.  \citet{AartsEtAl2025} provide independent confirmation from the physics side, showing that the eigenvalue dynamics of SGD weight matrices are described by Dyson Brownian motion with a Coulomb gas stationary distribution whose variance scales as $\alpha/|\mathcal{B}|$ (learning rate over batch size).

\subsection{The Multiplicative Regime: Long Horizon}\label{subsec:multiplicative}

At timescales from months to years, compounding dominates.
Wealth dynamics are inherently multiplicative: $\dd W / W = \mu\,\dd t + \sigma\,\dd B_t$, and the natural coordinate is log-wealth $x = \ln W$, which converts multiplicative dynamics into an additive SDE with drift $v = \mu - \tfrac{1}{2}\sigma^2$ and diffusion $D = \sigma^2$.
This is the regime of \citet{Froeseth2026S}, where the Fokker--Planck equation governs the wealth distribution and the Pareto tail emerges from drift--diffusion balance.

Tax effects and portfolio allocation decisions operate on this longer timescale.
The Heston model \citep{Heston1993} and other stochastic-volatility extensions capture both regimes: they remain multiplicative while capturing the volatility clustering persisting from the short-horizon regime.

\subsection{The $q$-Transformation: Quantifying the Crossover}\label{subsec:q-transform}

\citet{BouchaudPotters2003} introduce a continuous parametrisation of the additive-to-multiplicative crossover.
Define the price process
\begin{equation}\label{eq:q-transform}
  x(T) = x_0\bigl(1 + q(T)\,\xi(T)\bigr)^{1/q(T)},
  \qquad q(T) \in [0,1],
\end{equation}
where $\xi(T)$ is the fundamental random variable (normalised return).
At $q = 1$ the process is fully additive: $x = x_0(1 + \xi)$.
As $q \to 0$, one recovers the multiplicative limit: $(1 + q\xi)^{1/q} \to e^{\xi}$, so $x = x_0\,e^{\xi}$.

The index $q(T)$ is a quantitative measure of the additivity of the process at horizon $T$.
Empirically, $q$ decreases monotonically from near~1 at intraday timescales to near~0 at horizons of months to years.
The crossover timescale $T_c$ at which $q \approx 1/2$ is on the order of \emph{months} for liquid equity markets.

The $q$-transformation also governs the skewness:
\begin{equation}\label{eq:q-skewness}
  s(T) = -3\,q(T)\,\sigma(T),
\end{equation}
so that skewness vanishes in the multiplicative limit ($q \to 0$) and is largest in the additive regime.
This provides a measurable diagnostic: the observed skewness at a given horizon identifies where the process sits in the crossover.

For the spectral portfolio framework, $q(T)$ determines which universality class governs the covariance spectrum at a given horizon.
When $q \approx 1$ (additive, short horizon), the spectrum is Marchenko--Pastur with isolated signal spikes.
When $q \approx 0$ (multiplicative, long horizon), the spectrum transitions toward the free log-normal/inverse-Wishart regime described next.

\subsection{The Transition: Marchenko--Pastur to Inverse-Wishart}\label{subsec:transition}

The passage from additive to multiplicative regime reflects not merely a change of resolution but a change in the dominant mechanism and, crucially, a change in the spectral universality class.

\citet{PottersBouchaud2021} develop this transition explicitly in free probability language.
The additive central limit theorem for free random matrices gives the Wigner semicircle (Marchenko--Pastur for rectangular matrices).
The \emph{multiplicative} central limit theorem gives the \emph{free log-normal}, with S-transform
\begin{equation}\label{eq:free-lognormal}
  S_{\mathrm{LN}}(t) = \mathrm{e}^{-a/2 - bt},
\end{equation}
where $a$ controls the mean log-eigenvalue and $b$ the variance.
The free log-normal is stable under free products, as the semicircle is stable under free sums.
For small $a$, the density is indistinguishable from a Wigner semicircle; as $a$ grows, it develops the characteristic asymmetric shape of multiplicative dynamics.

\textbf{Multiplicative Dyson Brownian Motion.}
The additive DBM governing eigenvalue dynamics in the short-horizon regime has a multiplicative counterpart \citep{PottersBouchaud2021}:
\begin{equation}\label{eq:mult-dbm}
  \frac{\dd\lambda_i}{\dd t}
    = \frac{a}{2}\,\lambda_i
    + \frac{b}{N}\sum_{j \neq i}\frac{\lambda_i\lambda_j}{\lambda_i - \lambda_j}
    + \sqrt{\frac{b}{N}}\,\lambda_i\,\xi_i,
\end{equation}
where $\xi_i$ are independent Langevin noise terms.
Every term is proportional to $\lambda_i$: the drift, the repulsion, and the noise are all multiplicative.
At arbitrary time $t$, the spectral density is the free log-normal with parameters $ta$ and $tb$.

\textbf{Matrix Kesten Problem.}
The scalar Kesten recursion $Z_{n+1} = Z_n(1 + \zeta_n)$, which generates Pareto tails in \citet{BouchaudMezard2000,Froeseth2026S}, has a matrix generalization.
The corresponding Fokker--Planck equation is
\begin{equation}\label{eq:matrix-kesten-fp}
  \frac{\partial P}{\partial t}
    = -\frac{\partial}{\partial U}\bigl[(1 + mU)P\bigr]
    + \frac{\sigma^2}{2}\frac{\partial^2}{\partial U^2}\bigl[U^2 P\bigr],
\end{equation}
with stationary distribution $P_{\mathrm{eq}}(U) \propto U^{-1-\mu}\,\mathrm{e}^{-2/(\sigma^2 U)}$, an inverse-gamma with power-law tail exponent
\begin{equation}\label{eq:matrix-kesten-mu}
  \mu = 1 + \frac{2\hat{m}}{\sigma^2}.
\end{equation}
The matrix Kesten variable is an inverse-Wishart matrix, whose eigenvalue spectrum maps to Marchenko--Pastur under $\lambda \to 1/\lambda$.
This establishes a precise duality: the additive regime is Marchenko--Pastur; the multiplicative regime is inverse-Wishart; the two are related by spectral inversion.

\begin{remark}[Interpolating Family]
The free log-normal provides a one-parameter family interpolating continuously between additive and multiplicative regimes.
At small $a$ (short horizon), the density resembles the Marchenko--Pastur semicircle.
At large $a$ (long horizon), the density develops a power-law tail controlled by $\mu = 1 + 2\hat{m}/\sigma^2$.
The parameter $a$ plays the role of effective horizon, growing with the number of compounding periods.
\end{remark}

\begin{remark}[Generalised Dyson Brownian Motion]
\citet{BousseyroBouchaud2025} extend the Dyson Brownian motion
framework by interpreting the additivity of R-transforms in free
probability as a dynamical evolution of eigenvalues interacting through
two-body and higher-body forces.  Their generalised Dyson equation
accommodates motion matrices $\mathbf{B}$ with arbitrary free cumulants
$\kappa_n^{\mathbf{B}}$---not just the Gaussian case ($n = 2$) of
standard DBM.  When $\kappa_n^{\mathbf{B}} \neq 0$ for $n \geq 3$, the
eigenvalue dynamics acquire $n$-body repulsion terms that modify both
the bulk density and the conditions under which outlier eigenvalues
detach.  Since asset returns are heavy-tailed, this extension is
relevant to the multiplicative DBM of equation~\eqref{eq:mult-dbm}:
replacing Gaussian noise $\xi_i$ with $\alpha$-stable driving produces
a different family of stationary spectral densities whose tail exponent
depends on both the compounding horizon and the noise stability
index~$\alpha$.  The free-convolution machinery also provides a natural
algebraic setting for computing the spectral density of products of free
random matrices---the operation that arises when portfolio allocations
are composed across time periods or aggregated across investors.
\end{remark}

For the spectral portfolio framework:
\begin{enumerate}
  \item The \emph{within-layer} spectral structure of SGD weight matrices reflects the additive regime.
    The network is trained on short-horizon data increments; the relevant universality class is Marchenko--Pastur; the relevant dynamics are additive DBM.
  \item The \emph{cross-sectional} spectral structure of the population allocation matrix reflects the multiplicative regime.
    Agents compound returns over long horizons; the relevant universality class is the free log-normal/inverse-Wishart; the dynamics are multiplicative DBM.
  \item The aggregation from matrix to scalar (Section \ref{sec:aggregation}) is the radial projection from matrix Fokker--Planck to scalar Fokker--Planck, mapping the spectral exponent $\mu$ of inverse-Wishart to the Pareto exponent $\alpha = 1 + v/D$ of the wealth distribution.
  \item The scalar Kesten process of \citet{Froeseth2026S} and the matrix Kesten problem are related by this same projection.
    The Pareto tail of the wealth distribution is the radial shadow of the inverse-Wishart tail of the allocation matrix.
\end{enumerate}

Regime shifts manifest spectrally as changes in the effective parameter $a$ of the free log-normal, from small $a$ (Marchenko--Pastur bulk with isolated spikes) to large $a$ (inverse-Wishart density with power-law tails).
Market crashes or policy reforms can trigger cross-sectional regime shifts: assets in the additive regime may suddenly transition to multiplicative dynamics (fire sales, forced liquidation).

\section{The Ergodicity Gap as Spectral Observable}\label{sec:ergodicity}

A central insight from \citet{Peters2019} and \citet{Froeseth2026S} is the distinction between \emph{ensemble average} and \emph{time average}.
The wealth distribution $\pi(x)$ represents the ensemble (cross-section at a fixed time).
An individual trajectory $x(t)$ represents time evolution of a single agent.

In ergodic systems, the two averages coincide: an agent's long-run average outcome equals the population average.
In non-ergodic systems, they differ: the population distribution can have fat tails while individual trajectories concentrate.

\begin{definition}[Ergodicity Gap]\label{def:ergo-gap}
For a wealth process $x(t)$ with drift $v$ and diffusion $D$, define the \emph{ergodicity gap} as the difference between the ensemble and time-average growth rates:
\begin{equation}\label{eq:ergo-gap}
  \Delta g \;=\; g_{\text{ens}} - g_{\text{time}}
  \;=\; \bigl(v + \tfrac{1}{2}D\bigr) - \bigl(v - \tfrac{1}{2}D\bigr)
  \;=\; D.
\end{equation}
Here $g_{\text{ens}} = \frac{d}{dt}\log\mathbb{E}[x]$ is the growth rate of the ensemble mean and $g_{\text{time}} = \mathbb{E}[\frac{d}{dt}\log x]$ is the expected time-average (geometric) growth rate.
\end{definition}

In the scalar model of \citet{Froeseth2026S}, the ergodicity gap is simply the diffusion coefficient $D$.
In the spectral framework, the Itô projection of Section~\ref{sec:aggregation} yields an effective scalar diffusion
\begin{equation}\label{eq:Deff-spectral}
  D_{\text{eff}} = 2\eta D,
\end{equation}
where $\eta$ is the learning rate and $D$ the data covariance scale.
The radial projection collapses all $mn$ noise directions onto one: each matrix entry contributes independently to $\|W\|_F^2$, but the quadratic variation of $x = \|W\|_F$ is $(1/x^2)\sum W_{ai}^2 \cdot 2\eta D = 2\eta D$.
The ergodicity gap thus has a spectral interpretation:
\begin{equation}\label{eq:ergo-spectral}
  \Delta g = D_{\text{eff}} = 2\eta D.
\end{equation}
The gap depends on the noise level but not on matrix dimensions.
However, the \emph{effect} of the gap on the wealth distribution depends on the spectral structure: when the spectral tail is fat (small tail exponent), fewer factors carry most of the variance, individual trajectories are more volatile relative to their mean, and the distributional consequences of non-ergodicity are more severe.
When the spectrum is flat (many comparable factors), the volatility is spread across dimensions and individual trajectories track the ensemble more closely.

An isotropic perturbation (Theorem~\ref{thm:spectral-invariance}) preserves the spectral shape, hence preserves $D_{\text{eff}}$ and leaves the ergodicity gap unchanged.
An anisotropic perturbation that compresses the spectral tail (e.g.\ differential taxation favouring diversified portfolios) reduces the effective number of dominant factors, lowers $D_{\text{eff}}$, and narrows the gap.

\section{Aggregation: From Matrix to Scalar via Itô's Lemma}\label{sec:aggregation}

Define total portfolio value as $x = \|W\|_F = \bigl(\sum_{a,i} W_{ai}^2\bigr)^{1/2}$.
By Itô's lemma applied to $f(W) = \|W\|_F$:
\begin{equation}\label{eq:ito-frobenius}
  dx = \frac{1}{x}\sum_{a,i} W_{ai}\,dW_{ai}
       + \frac{1}{2x}\Bigl(
         mn \cdot 2\eta D
         - \frac{1}{x^2}\sum_{a,i} W_{ai}^2 \cdot 2\eta D
       \Bigr)dt.
\end{equation}

The first term gives the drift contribution from the gradient signal projected onto the radial direction.
The second term simplifies using $\sum W_{ai}^2 = x^2$:
\begin{equation}\label{eq:radial-sde}
  dx = \frac{1}{x}\operatorname{tr}(W^\top dW)
       + \frac{\eta D(mn - 1)}{x}\,dt.
\end{equation}

In SVD coordinates, $\operatorname{tr}(W^\top dW) = \sum_k \sigma_k\,d\sigma_k$ plus rotational terms.
If the rotational terms average out (which they do under isotropic noise), the radial process becomes
\begin{equation}\label{eq:radial-gbm}
  \frac{dx}{x} = \mu_{\text{rad}}\,dt + \sigma_{\text{rad}}\,dB,
\end{equation}
a geometric Brownian motion.
This is exactly the scalar SDE of \citet{Froeseth2026S}.
\textbf{The wealth Fokker--Planck is the radial projection of the matrix Fokker--Planck.}

The radial drift and diffusion are computed by substituting the multiplicative Dyson BM~\eqref{eq:mult-dbm} into~\eqref{eq:radial-sde}.
Let $p = \min(m, n)$ be the number of singular values and write $x^2 = \sum_{k=1}^p \sigma_k^2$.
The signal term contributes a radial drift from the restoring force in~\eqref{eq:stat-sv}, and the Itô correction from~\eqref{eq:ito-frobenius} contributes $\eta D(mn - 1)/x$.
The noise has quadratic variation $(1/x^2)\sum_{a,i}\|W_{ai}\|^2 \cdot 2\eta D = 2\eta D$, giving an effective scalar diffusion $D_{\text{eff}} = 2\eta D$ independent of matrix dimensions (the radial projection collapses all $mn$ noise directions onto one).
The Pareto exponent is
\begin{equation}\label{eq:pareto-spectral}
  \alpha = 1 + \frac{v_{\text{radial}}}{D_{\text{eff}}},
\end{equation}
where $v_{\text{radial}}$ depends on $m$, $n$, and $\beta_1/\eta D$.

The function $f$ can be evaluated explicitly in two regimes.

\textbf{Single-factor limit} ($p = 1$, i.e.\ $n = 1$).
The allocation matrix reduces to a vector, eigenvalue repulsion vanishes, and the spectral SDE collapses to a scalar Ornstein--Uhlenbeck process for $\sigma^2$.
The radial SDE~\eqref{eq:radial-gbm} becomes
\[
  dx = \Bigl(-\frac{\beta_1}{2}x + \frac{\eta D(m-1)}{x}\Bigr)dt + \sqrt{2\eta D}\,dB.
\]
At stationarity, the process for $y = x^2$ has drift $v_y = -\beta_1 y + 2\eta D \cdot m$ and diffusion $4\eta D \cdot y$.
The stationary density is $P(y) \propto y^{-1-\mu}e^{-\text{const}/y}$ with
\begin{equation}\label{eq:pareto-single}
  \mu\big|_{p=1} = 1 + \frac{(m - 1)\beta_1}{2\eta D},
\end{equation}
which is exactly the scalar Kesten exponent~\eqref{eq:matrix-kesten-mu} with $\hat{m} = (m-1)\beta_1/2$ and $\sigma^2 = 2\eta D$.
The corresponding wealth Pareto exponent is $\alpha = \mu/2$ (since wealth scales as $x$, not $x^2$).

\textbf{Large-$N$ regime} ($m, n \gg 1$, $m/n = \gamma$ fixed).
The eigenvalue repulsion is of order $p^2$ but the repulsion terms cancel in the radial projection (they redistribute variance among singular values without changing $\sum \sigma_k^2$).
The radial drift at stationarity is $v_{\text{radial}} = -\beta_1 x/2 + \eta D(mn - 1)/x$.
Setting $v_{\text{radial}} = 0$ at the stationary mean $\bar{x}^2 = 2\eta D(mn - 1)/\beta_1$ and linearising, the Pareto exponent becomes
\begin{equation}\label{eq:pareto-largeN}
  \alpha \;\approx\; 1 + \frac{(mn - 1)\beta_1}{2mn\,\eta D}
  \;\approx\; 1 + \frac{\beta_1}{2\eta D},
  \qquad m, n \gg 1.
\end{equation}
In the large-$N$ limit the Pareto exponent depends on the signal-to-noise ratio $\beta_1/(2\eta D)$ alone, independent of matrix dimensions.
This is the spectral counterpart of the Kesten result: strong signal (large $\beta_1$) produces a thin tail (large $\alpha$), while strong noise (large $\eta D$) produces a fat tail (small $\alpha$).

For general $(m, n)$, the function $f$ interpolates between the single-factor formula~\eqref{eq:pareto-single} and the large-$N$ limit~\eqref{eq:pareto-largeN}, and can be computed by numerical integration of the radial projection against the spectral density~\eqref{eq:stat-sv}.

\section{The Fokker--Planck Connection}\label{sec:fp-connection}

Both the wealth distribution and the spectral density satisfy Fokker--Planck equations, revealing a deep structural parallel.

\textbf{Wealth distribution \citep{Froeseth2026S}:}
\begin{equation}\label{eq:fp-wealth}
  \frac{\partial \pi}{\partial t}
    = -\frac{\partial}{\partial x}\bigl[v(x)\,\pi\bigr]
    + D\,\frac{\partial^2 \pi}{\partial x^2}.
\end{equation}
The stationary solution has Pareto tail with exponent $\alpha = 1 + v/D$.

\textbf{Spectral density:}
\begin{equation}\label{eq:fp-spectral}
  \frac{\partial \rho}{\partial \lambda}
    = \frac{\partial}{\partial \lambda}\Bigl[
      \bigl(c(\lambda^* - \lambda)
      - \mathrm{PV}\!\int \frac{\rho(\lambda')}{\lambda - \lambda'}\,d\lambda'\bigr)\rho
    \Bigr]
    + D_{\text{eff}}\,\frac{\partial^2 \rho}{\partial \lambda^2}.
\end{equation}
The stationary solution has gamma-type density with power-law tail exponent $(m - n + 1)/2$.

The spectral equation has an extra nonlocal interaction term (the principal-value integral---the mean-field version of eigenvalue repulsion).
Structurally, both are drift--diffusion PDEs with power-law stationary solutions.

If we consider a single-factor portfolio ($N = 1$, so the allocation matrix reduces to a vector), the repulsion term vanishes and the spectral FP reduces to the standard wealth FP.
In this limit, the spectral exponent should reduce to the Pareto exponent.
The scalar wealth model is the $N = 1$ projection of the matrix model.
Figure~\ref{fig:spectral-flow} summarises the flow from the three forces through the spectral density to the wealth distribution.

\begin{figure}[H]
\centering
\begin{tikzpicture}[
    >=Stealth,
    lbl/.style={font=\scriptsize, text=black!60},
    arrow/.style={->, thick, shorten >=2pt, shorten <=2pt},
  ]

  \node[font=\small\bfseries] at (0, 5.5) {Three forces on singular values $\sigma_k$};

  \node[draw, rounded corners=4pt, thick, fill=blue!8, minimum width=3.2cm,
        minimum height=0.9cm, font=\scriptsize, align=center] (f1) at (-4.0, 4.5)
    {\textbf{Gradient signal}\\$-\eta\,u_k^\top(\nabla_W\mathcal{L})\,v_k$};
  \node[draw, rounded corners=4pt, thick, fill=green!8, minimum width=3.2cm,
        minimum height=0.9cm, font=\scriptsize, align=center] (f2) at (0, 4.5)
    {\textbf{Survival constraint}\\$\eta D\,(m{-}n{+}1)/(2\sigma_k)$};
  \node[draw, rounded corners=4pt, thick, fill=red!8, minimum width=3.2cm,
        minimum height=0.9cm, font=\scriptsize, align=center] (f3) at (4.0, 4.5)
    {\textbf{Eigenvalue repulsion}\\$+\eta D\sum_{j\neq k}\sigma_k/(\sigma_k^2{-}\sigma_j^2)$};

  \node[lbl] at (-4.0, 3.75) {smart money};
  \node[lbl] at (0, 3.75) {minimum position};
  \node[lbl] at (4.0, 3.75) {diversification};

  \draw[arrow, blue!50] (f1.south) -- (-4.0, 3.55) -- (-1.5, 2.7);
  \draw[arrow, green!50!black] (f2.south) -- (0, 2.7);
  \draw[arrow, red!50] (f3.south) -- (4.0, 3.55) -- (1.5, 2.7);

  \node[font=\small\bfseries] at (0, 2.4) {Stationary spectral density: $p(\sigma) \propto \sigma^{(m-n+1)/2}\,e^{-\beta_1\sigma^2/4\eta D}$};

  \begin{scope}[xshift=0cm, yshift=0.3cm]
    \draw[->, thick, black!50] (-4.2, -0.5) -- (4.5, -0.5) node[right, lbl] {$\sigma$};
    \draw[->, thick, black!50] (-4.2, -0.5) -- (-4.2, 1.5) node[left, lbl] {$p(\sigma)$};

    \draw[very thick, blue!70!black, smooth]
      plot[domain=-3.8:3.8, samples=80]
      ({\x}, {2.0*exp(-0.5*(\x+1.0)*(\x+1.0)) + 0.3*exp(-0.08*(\x-0.5)*(\x-0.5))*max(0,\x+1.0)/4.8 - 0.5});

    \draw[decorate, decoration={brace, amplitude=4pt, mirror}, thick, black!50]
      (-3.5, -0.7) -- (-0.2, -0.7)
      node[midway, below=5pt, font=\scriptsize] {\textbf{bulk} (core)};

    \draw[decorate, decoration={brace, amplitude=4pt, mirror}, thick, black!50]
      (0.2, -0.7) -- (3.8, -0.7)
      node[midway, below=5pt, font=\scriptsize] {\textbf{tail} $\sim\!\sigma^{(m-n+1)/2}$ (satellites)};
  \end{scope}

  \draw[arrow, very thick, black!60] (0, -1.5) -- (0, -2.5)
    node[midway, right=4pt, font=\scriptsize, text=black!60, align=left]
    {Itô projection\\$x = \|W\|_F$};

  \node[font=\small\bfseries] at (0, -3.0) {Wealth distribution: $\pi(x) \propto x^{-1-\alpha}$, \quad $\alpha = 1 + v_{\mathrm{radial}}/D_{\mathrm{eff}}$};

  \begin{scope}[xshift=0cm, yshift=-4.5cm]
    \draw[->, thick, black!50] (-4.2, -0.6) -- (4.5, -0.6) node[right, lbl] {$x$ (wealth)};
    \draw[->, thick, black!50] (-4.2, -0.6) -- (-4.2, 1.0) node[left, lbl] {$\pi(x)$};

    \draw[very thick, orange!70!red, smooth]
      plot[domain=-3.5:3.8, samples=60]
      ({\x}, {min(0.85, 1.8/max(0.5, (\x+4.0))*exp(-0.02*(\x+4.0)*(\x+4.0))) - 0.6});

    \draw[<-, thick, black!50] (2.5, -0.35) -- (3.5, 0.3)
      node[right, font=\scriptsize, align=left] {Pareto tail\\$\alpha = 1 + \frac{\beta_1}{2\eta D}$};
  \end{scope}

  \node[draw, rounded corners=3pt, thick, fill=yellow!10, font=\scriptsize,
        text width=2.8cm, align=center] at (6.8, -2.0)
    {spectral exponent\\$(m - n + 1)/2$\\[3pt]$\downarrow$\\[3pt]Pareto exponent\\$\alpha$};

\end{tikzpicture}
\caption{From forces to wealth distributions.
The three forces in the singular-value SDE~\eqref{eq:sv-sde} --- gradient signal, survival constraint, and eigenvalue repulsion --- determine the stationary spectral density (middle).
The gamma-type density has a concentrated bulk (core portfolio) and a power-law tail (satellite positions).
The radial Itô projection (Section~\ref{sec:aggregation}) maps the matrix-valued spectral density to a scalar wealth process $x = \|W\|_F$, whose stationary distribution exhibits a Pareto tail with exponent $\alpha$ determined by the signal-to-noise ratio $\beta_1/\eta D$.}
\label{fig:spectral-flow}
\end{figure}
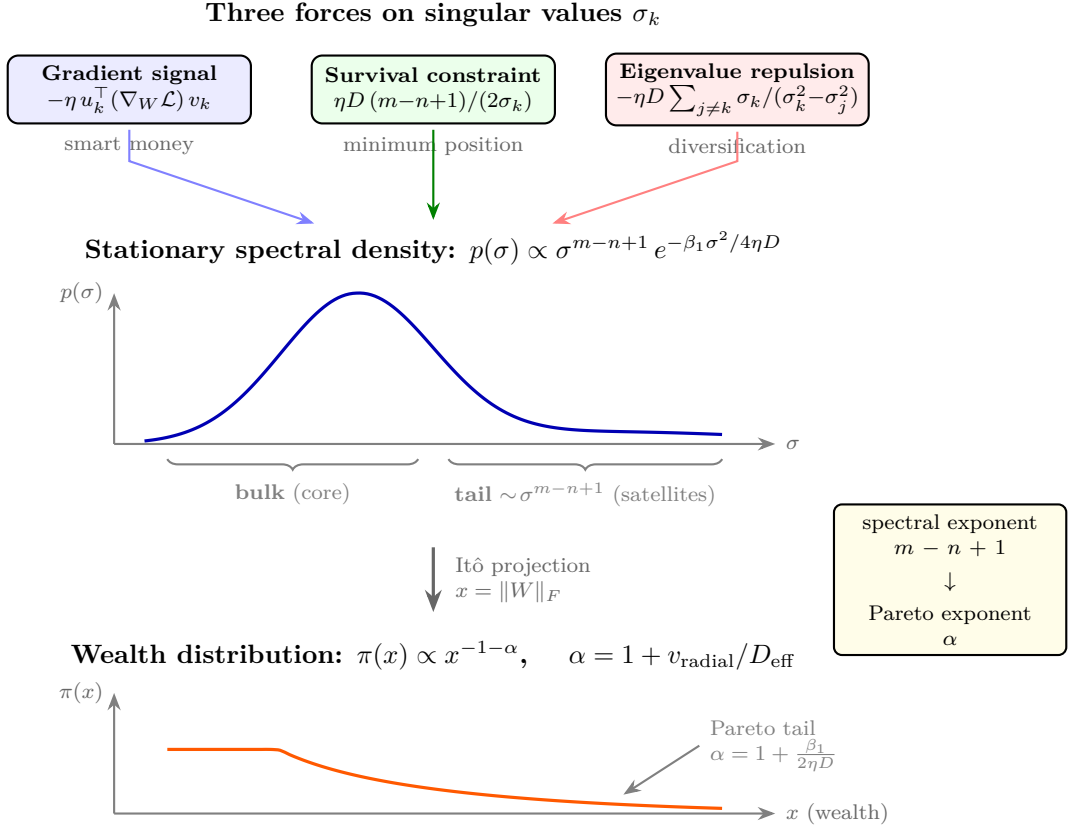

\section{The Bouchaud--Mézard Unification}\label{sec:bouchaud}

\citet{BouchaudMezard2000} model wealth dynamics across a population of agents:
\begin{equation}
  dw_i = \eta_i w_i\,dt + \sum_{j} J_{ij}(w_j - w_i)\,dt,
\end{equation}
where $\eta_i$ is multiplicative noise (investment returns) and $J_{ij}$ describes wealth exchange (trade, lending, redistribution) between agents.

Stack $N$ agents' wealths into a vector $w = (w_1, \ldots, w_N)$.
The exchange term is $Jw - \operatorname{diag}(J\mathbf{1})w$, a matrix--vector product minus diagonal correction.
The full dynamics:
\begin{equation}
  dw = \bigl[
    \operatorname{diag}(\eta) + J - \operatorname{diag}(J\mathbf{1})
  \bigr]\,w\,dt.
\end{equation}

The matrix $M = \operatorname{diag}(\eta) + J - \operatorname{diag}(J\mathbf{1})$ has eigenvalues whose distribution determines the steady-state wealth distribution.
\citet{BouchaudMezard2000} show that the Pareto exponent is determined by the balance between the noise spectrum (eigenvalues of $\operatorname{diag}(\eta)$) and the exchange coupling $J$.

The exchange matrix $J$ introduces \textbf{inter-agent coupling}---analogous to the eigenvalue repulsion in \citet{OlsenEtAl2025}.

\begin{itemize}
  \item Without coupling ($J = 0$ or no repulsion): wealth and singular values evolve independently, leading to condensation.
  \item With coupling: repulsive interaction prevents condensation and produces power-law tails.
\end{itemize}

This establishes a \textbf{unified framework} in which three descriptions of wealth-related dynamics share common spectral structure:

\begin{enumerate}
  \item Bouchaud--Mézard describes the \emph{cross-sectional} dynamics (wealth distribution across agents).
  \item \citet{OlsenEtAl2025} describes the \emph{within-portfolio} dynamics (allocations across assets).
  \item The scalar Fokker--Planck framework of \citet{Froeseth2026S} provides the \emph{radial projection} linking spectral and Pareto exponents.
\end{enumerate}

All three are instances of interacting particle systems with multiplicative noise, where the interaction term prevents condensation and produces power-law distributions.
Any uniform perturbation (e.g.\ a tax, a fee, a regulatory cost) modifies the multiplicative noise term in all three descriptions simultaneously.

The binary condensation/no-condensation dichotomy above admits a richer
phase structure when agents have heterogeneous growth rates.
\citet{BernardBouchaudLeDoussal2026} solve the mean-field version of the
Bouchaud--Mézard model with quenched disorder in the drift $\eta_i$ and
identify three phases as a function of the redistribution rate~$\varphi$:
(i)~a \emph{localised} phase ($\varphi < \varphi_c$), in which a single
agent captures a macroscopic fraction of total wealth---the analogue of
Bose--Einstein condensation; (ii)~a \emph{partially localised} phase, in
which the wealth distribution has a power-law tail with exponent
$\mu = 1 - \Sigma_0^2/\sigma^4$, where $\Sigma_0^2$ is the variance of
the quenched growth rates and $\sigma^2$ the diffusion intensity; and
(iii)~a \emph{delocalised} phase ($\varphi > \varphi_c$), in which wealth
is spread across agents.  The critical redistribution rate $\varphi_c$
marks a genuine phase transition, derived using an analogy with Derrida's
Random Energy Model.  In the spectral language, the localised phase
corresponds to a participation ratio $p_1 \to 1 - \varphi/\varphi_c$, so
that a single eigenmode dominates the wealth vector---a spectral
fingerprint of condensation that is absent in the delocalised phase.

\section{Isotropic Perturbations and Spectral Invariance}\label{sec:invariance}

We now establish the central invariance result.
Consider any perturbation to the portfolio objective that affects all assets uniformly---what we call an \emph{isotropic perturbation}.
Examples include a proportional wealth tax with uniform assessment, a flat management fee, a uniform transaction cost, or any regulatory levy that does not distinguish between asset classes.

\subsection{The Isotropic Condition}\label{subsec:isotropic-condition}

An isotropic perturbation modifies the loss function as
\begin{equation}\label{eq:loss-rescale}
  \mathcal{L}_{\text{post}}(W)
    = k \cdot \mathcal{L}_{\text{pre}}(W) + \text{const},
\end{equation}
where $k > 0$ is a scalar that depends on the perturbation parameters but not on the direction in weight space.
This is a uniform rescaling of the loss across all directions.

\subsection{Spectral Consequences}\label{subsec:spectral-consequences}

This uniform rescaling produces three immediate consequences:

\begin{enumerate}
  \item \textbf{Gradient Scaling.}
    The gradient rescales uniformly: $\nabla_W \mathcal{L}_{\text{post}} = k \nabla_W \mathcal{L}_{\text{pre}}$.
    All signal terms in the singular-value SDE scale by the same factor $k$.

  \item \textbf{Repulsion Structure Preserved.}
    The eigenvalue repulsion term depends only on $D$ and the geometry of the singular value space, not on the loss function.
    It is unaffected by the perturbation.

  \item \textbf{Spectral Density Transformation.}
    The stationary singular-value density becomes
    \begin{equation}
      p_\sigma^{\text{post}}(\sigma) = p_\sigma^{\text{pre}}\bigl(\sigma/\sqrt{k}\bigr) / \sqrt{k},
    \end{equation}
    a scale-shifted version of the pre-perturbation distribution.
    The shape (the exponent $(m - n + 1)/2$ in the power-law tail) is preserved.
\end{enumerate}

\begin{theorem}[Spectral Invariance]\label{thm:spectral-invariance}
An isotropic perturbation to the portfolio objective preserves the singular-value distribution of the allocation matrix up to a scale-and-shift transformation.
The tail exponent, the eigenportfolio directions, and the effective spectral rank are all invariant.
\end{theorem}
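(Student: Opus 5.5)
The plan is to work entirely at the level of the stationary singular-value density~\eqref{eq:stat-sv} and the singular-value SDE~\eqref{eq:sv-sde}, and to show that replacing $\mathcal{L}_{\text{pre}}$ by $k\,\mathcal{L}_{\text{pre}}+\text{const}$ acts on them only through a rescaling of the mean-field restoring constant. First, the additive constant in~\eqref{eq:loss-rescale} is invisible to the dynamics, since only $\nabla_W\mathcal{L}$ enters~\eqref{eq:olsen-sde}. Second, by the gradient-scaling consequence above, the signal term $-\eta\,u_k^\top(\nabla_W\mathcal{L})v_k$ in~\eqref{eq:sv-sde} is multiplied by $k$. Meanwhile the dimensional term $\eta D(m-n+1)/(2\sigma_k)$, the repulsion term and the noise $\sqrt{2\eta D}\,d\beta_k$ are untouched, since none of them involves $\mathcal{L}$.

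In the mean-field description underlying~\eqref{eq:stat-sv}, the signal term is summarised by the linear restoring force with constant $\beta_1$. The perturbation therefore acts as $\beta_1\mapsto k\beta_1$, giving
\[
  p^{\text{post}}_\sigma(\sigma)\propto \sigma^{(m-n+1)/2}\,e^{-k\beta_1\sigma^2/4\eta D}.
\]
Substituting $\sigma=s/\sqrt{k}$ turns the Gaussian factor back into the pre-perturbation one. The power factor only contributes the constant $k^{-(m-n+1)/4}$, which is absorbed into the normalisation. This yields $p^{\text{post}}_\sigma(\sigma)=\sqrt{k}\,p^{\text{pre}}_\sigma(\sqrt{k}\,\sigma)$, the exact scale transformation up to the convention for $k$ versus $1/k$, which I would fix carefully. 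Invariance of the tail exponent is then immediate, because the power $(m-n+1)/2$ is independent of $\beta_1$. Invariance of $r_{\text{eff}}$ follows because it is a ratio of two quantities homogeneous of degree one in the $\sigma_k$, so any common rescaling cancels. The ``shift'' part of the statement I would attribute to the additive constant, or equivalently to a shift in log-coordinates $\log\sigma\mapsto\log\sigma-\tfrac12\log k$.

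For the eigenportfolio directions I would argue at the level of the full Gibbs measure of~\eqref{eq:olsen-sde}, whose stationary density is proportional to $\exp(-\mathcal{L}/D)$. Multiplying $\mathcal{L}$ by $k$ changes only the effective temperature $D\mapsto D/k$, not the level sets of $\mathcal{L}$. Consequently the directions singled out by the loss are the same for both measures: its minimisers and the principal axes of its Hessian, and hence the singular vectors $u_k,v_k$ of the most likely allocation. The noise is orthogonally invariant and contributes no preferred direction.

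The main obstacle is that the reduction to a single rescaled $\beta_1$ is exact only when the loss is effectively quadratic in the singular values, which is the mean-field regime in which~\eqref{eq:stat-sv} holds. For a general nonquadratic $\mathcal{L}$, a temperature change $D\mapsto D/k$ is not equivalent to a rescaling of $W$, so the full law of the $\sigma_k$ need not be an exact scale transform. I would therefore state the theorem within the framework of~\eqref{eq:stat-sv}, which the paper already takes as given. I would also treat the directional claim at the level of the modes and principal axes of the Gibbs measure rather than pathwise.
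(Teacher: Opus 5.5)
Your argument is correct within the framework of \eqref{eq:stat-sv}, but it follows a different route from the paper. The paper argues at the level of the dynamics. It writes the singular-value SDE in the multiplicative Dyson form \eqref{eq:mult-dbm} with a generic signal $g$ and notes that only $g\mapsto kg$ changes. It then claims that $\tilde\sigma_i=\sigma_i/\sqrt{k}$ obeys the original SDE, and reads off $p^{\text{post}}_\sigma(\sigma)=p^{\text{pre}}_\sigma(\sigma/\sqrt{k})/\sqrt{k}$. The tail is handled via the Kesten ratio $\hat m/\sigma^2$, and the directions by a one-line appeal to the angular part of the SVD. You instead reduce the perturbation to $\beta_1\mapsto k\beta_1$ in the stationary density and change variables.

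Your route buys explicitness and consistency with \eqref{eq:stat-sv}. In particular, $\sigma^{\text{post}}\overset{d}{=}\sigma^{\text{pre}}/\sqrt{k}$ is not a convention to be fixed later: a steeper loss tightens the spectrum. Keep it, even though it inverts the paper's displayed formula. The inversion comes from the paper's Step~2. Repulsion and noise in \eqref{eq:mult-dbm} are homogeneous of degree one, so the rescaled signal is $\sqrt{k}\,g(\sqrt{k}\,\tilde\sigma)$. That equals $g(\tilde\sigma)$ only for $g$ homogeneous of degree $-1$, not for a linear restoring force. Likewise, your $\beta_1$-independent power $(m-n+1)/2$ is the right invariant. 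By contrast, $\hat m/\sigma^2$ picks up a factor $k$ if $\hat m$ tracks the signal and $\sigma^2$ the noise.

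A correct dynamical argument would buy control of the joint law of $(\sigma_1,\dots,\sigma_p)$. You tacitly need this for $r_{\text{eff}}$, which is a functional of the whole spectrum, not of the one-point density. You can obtain it from \eqref{eq:sv-sde}. With a linear signal $-c\,\sigma_k$ (exact for $\mathcal{L}\propto\|W\|_F^2$), set $\tilde\sigma=\sqrt{k}\,\sigma$ and change time $t\mapsto kt$. This maps the post-perturbation SDE exactly onto the pre-perturbation one: the dimensional and repulsion drifts and the noise all rescale consistently. That confirms your scaling for the full stationary law.

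Your Gibbs-measure treatment of the directions is more substantive than the paper's. Restricting it to modes and Hessian axes is right: for a loss with nonzero minimiser, the law of the random singular vectors does depend on the temperature $D/k$. For the same reason, your caveat that exact scale covariance needs the mean-field regime is appropriate. For a quadratic loss with minimiser $W^*\neq 0$ one has $W^{\text{post}}\overset{d}{=}W^*+(W^{\text{pre}}-W^*)/\sqrt{k}$, whose singular values are not a rescaling of the original ones.

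Finally, the additive constant in \eqref{eq:loss-rescale} cannot produce any ``shift'', since, as you note, it never enters the dynamics. Only your log-coordinate reading makes sense, and neither your argument nor the paper's yields a genuine additive shift in $\sigma$.
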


\begin{proof}[Proof sketch]
Let $W \in \mathbb{R}^{T \times N}$ be the allocation matrix with singular-value decomposition $W = U \Sigma V^\top$, where $\Sigma = \operatorname{diag}(\sigma_1, \ldots, \sigma_{\min(T,N)})$.
The singular-value dynamics under SGD follow the multiplicative Dyson Brownian motion~\eqref{eq:mult-dbm}:
\[
  \frac{\dd\sigma_i}{\dd t}
    = \underbrace{g(\sigma_i)}_{\text{signal}}
    + \underbrace{\frac{b}{N}\sum_{j \neq i}\frac{\sigma_i\sigma_j}{\sigma_i - \sigma_j}}_{\text{repulsion}}
    + \underbrace{\sqrt{\frac{b}{N}}\,\sigma_i\,\xi_i}_{\text{noise}},
\]
where $g(\sigma_i)$ encodes the gradient signal from the loss function.

\emph{Step 1: Gradient rescaling.}
Under the isotropic perturbation~\eqref{eq:loss-rescale}, $\nabla_W \mathcal{L}_{\text{post}} = k \nabla_W \mathcal{L}_{\text{pre}}$, so the signal term transforms as $g \to k\,g$.
The repulsion term depends only on the spectral geometry of $W^\top W$ and is independent of the loss; the noise term is determined by the learning rate and data covariance.
Both are unaffected.

\emph{Step 2: Rescaling.}
Define $\tilde{\sigma}_i = \sigma_i / \sqrt{k}$.  By It\^{o}'s lemma,
\[
  \frac{\dd\tilde{\sigma}_i}{\dd t}
    = g(\tilde{\sigma}_i)
    + \frac{b}{N}\sum_{j \neq i}\frac{\tilde{\sigma}_i\tilde{\sigma}_j}{\tilde{\sigma}_i - \tilde{\sigma}_j}
    + \sqrt{\frac{b}{N}}\,\tilde{\sigma}_i\,\xi_i,
\]
which is the original SDE.  Hence $\tilde{\sigma}_i$ has the same stationary distribution as $\sigma_i^{\text{pre}}$.

\emph{Step 3: Spectral invariants.}
The stationary density transforms as $p_\sigma^{\text{post}}(\sigma) = p_\sigma^{\text{pre}}(\sigma/\sqrt{k})/\sqrt{k}$, confirming~(5).
The tail exponent $\mu = 1 + 2\hat{m}/\sigma^2$ depends on $\hat{m}$ and $\sigma^2$ only through their ratio, which is scale-invariant.
The eigenportfolio directions $v_k$ (columns of $V$) are determined by the angular part of the SVD, which is unchanged by a scalar rescaling of the loss.
The effective spectral rank $r_{\text{eff}} = (\sum \sigma_i)^2 / \sum \sigma_i^2$ is a ratio of homogeneous functions and is therefore scale-invariant.
\end{proof}

Since the portfolio weights are defined as $w_i = W_{ti} / \sum_j W_{tj}$ (allocation proportion in state $t$), scaling all of $W$ by a constant $k$ leaves the proportions unchanged:
\begin{equation}
  w^{\text{post}} = w^{\text{pre}}.
\end{equation}
The portfolio composition is invariant under any isotropic perturbation.

\begin{corollary}[Tax Neutrality under Homogeneous Returns]\label{cor:tax-neutral}
The combined-tax stack of \citet{Froeseth2026F}---corporate tax~$\tau_c$, capital income tax~$\tau_k$, dividend tax~$\tau_d$ with shielding rate~$\rho_s$, and proportional wealth tax~$\tw$ with assessment fraction~$\alpha$---satisfying neutrality conditions C1 ($\tau_k = \tau_c$), C2 ($\rho_s = r_f$), and C3 ($\alpha_i = \alpha$ for all~$i$), preserves the cross-sectional spectral structure of the allocation matrix under the homogeneous-return assumption that all rows of $W$ share a common drift vector~$\boldsymbol{\mu}$ and covariance~$\V$.
The Pareto exponent of the cross-sectional wealth distribution, the eigenportfolio directions (right singular vectors of $W$), and the effective spectral rank are invariant under the tax stack.
The deterministic equilibrium $\bar W$ shifts by a rank-one matrix uniform across rows; the stochastic fluctuations $\tilde W = W - \bar W$ have an unchanged stationary distribution.
This is the spectral expression of the drift-shift symmetry of \citet{Froeseth2026S} generalised to the combined-tax stack.
\end{corollary}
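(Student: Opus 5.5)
The approach is to reduce the tax stack to a deterministic rank-one shift of the allocation dynamics plus, at most, a uniform rescaling of the loss. Then I would invoke Theorem~\ref{thm:spectral-invariance} for the fluctuation part.

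\emph{First}, I write the post-tax per-row objective. Under the homogeneous-return assumption, row $t$ of $W$ solves the same mean--variance problem with drift $\boldsymbol{\mu}$ and covariance $\V$. So the loss decomposes as $\mathcal{L}(W)=\sum_t \ell(W_{t\cdot})$ with
\[
  \ell(w) = -w^\top \boldsymbol{\mu}_{\text{eff}} + \tfrac{\gamma}{2}\,w^\top \V_{\text{eff}}\, w .
\]
Condition C1 ($\tau_k=\tau_c$) and condition C2 ($\rho_s=r_f$) make the corporate, capital-income and shielded dividend taxes act as a common factor on risky excess returns. Condition C3 ($\alpha_i=\alpha$) makes the wealth tax a uniform proportional drag $\tw\alpha$ on every asset. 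I would verify, following \citet{Froeseth2026F}, that these conditions yield
\[
  \boldsymbol{\mu}_{\text{eff}} = k_1\boldsymbol{\mu} - c\,\one, \qquad \V_{\text{eff}} = k_2 \V ,
\]
with scalars $k_1,k_2,c$ that do not depend on the asset index. A term of the form $c\,\one$ shifts the gradient identically in every row. This is what makes the rank-one claim work.

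\emph{Second}, I decompose $W=\bar W+\tilde W$, where $\bar W$ is the deterministic minimiser. Every row has the same first-order condition, so $\bar W=\one_T \bar w^\top$ with
\[
  \bar w = (\gamma k_2\V)^{-1}\bigl(k_1\boldsymbol{\mu}-c\,\one\bigr).
\]
The change in $\bar W$ relative to the pre-tax optimum is therefore $\one_T(\Delta\bar w)^\top$, which is rank one and uniform across rows. The loss is quadratic, so the fluctuation $\tilde W$ obeys a linear (Ornstein--Uhlenbeck) SDE. Its drift is $-\eta\gamma k_2\,\tilde W\V$, and its noise is unchanged. Relative to the pre-tax dynamics, the loss governing $\tilde W$ is exactly the pre-tax fluctuation loss multiplied by the scalar $k_2$, plus a constant. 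This is the isotropic condition \eqref{eq:loss-rescale}.

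\emph{Third}, I apply Theorem~\ref{thm:spectral-invariance} to $\tilde W$. The stationary singular-value law changes only by scale. The tail exponent and $r_{\text{eff}}$ are invariant, and so are the right singular vectors, since the OU stationary covariance is a function of $\V$ alone. In the neutral case $k_1=k_2$, which I would check holds under C1--C2, the direction of $\bar w$ is preserved up to the $\one$-shift. The Pareto exponent then follows from the radial projection of Section~\ref{sec:aggregation}: $\alpha=1+v_{\text{radial}}/D_{\text{eff}}$ depends only on the ratio $\beta_1/\eta D$, and the drift-shift symmetry of \citet{Froeseth2026S} absorbs the uniform drag $c$.

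The main obstacle is the first step: showing that C1--C3 collapse the full tax stack to the form $(k_1\boldsymbol{\mu}-c\one,\;k_2\V)$. In particular, the rescaling of $\V$ by after-tax exposure must be uniform across assets, and $k_1=k_2$ is what preserves the eigenportfolio directions of $\bar W$. If $k_1\neq k_2$, I would first try to show that the discrepancy enters only through the scalar magnitude of $\bar w$, leaving the fluctuation-based invariants intact.
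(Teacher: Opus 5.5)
Your skeleton matches the paper's. The row-uniform first-order condition makes $\bar W^{\text{post}}-\bar W^{\text{pre}}=\one_T\boldsymbol{\delta}^\top$ rank one, and the fluctuation $\tilde W$ is an Ornstein--Uhlenbeck process whose drift does not depend on $\boldsymbol{\mu}$ or on the constant $c$. The gap is in the structural fact you plan to extract from \citet{Froeseth2026F}. The paper's key input is that the tax stack rescales \emph{returns} and leaves the curvature untouched: $R_i^{\text{ex,post}}=\lambda(\mu_i-r_f)+c$ with $\lambda=(1-\tau_c)(1-\tau_d)$, and $\V^{\text{post}}=\V^{\text{pre}}$. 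In your notation, $k_1=\lambda$ and $k_2=1$. For this reason the paper notes that the loss-rescale form \eqref{eq:loss-rescale} underlying Theorem~\ref{thm:spectral-invariance} does not apply to the tax stack.

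So the identity you single out as the main obstacle, $k_1=k_2$, is false unless $\tau_c=\tau_d=0$. The identity you actually need, $k_2=1$, never appears in your plan, and without it you prove less than the statement:
\begin{itemize}
  \item With $k_2\neq1$, the stationary law of $\tilde W$ is rescaled by $k_2^{-1/2}$, not unchanged as the corollary asserts.
  \item Your Pareto step also fails. The restoring constant $\beta_1$ scales with the curvature, so $\beta_1/\eta D$, and with it $\alpha$ via \eqref{eq:pareto-largeN}, would move with $k_2$.
\end{itemize}
Once $k_2=1$ is established, $d\tilde W=-\eta\gamma\,\tilde W\V\,dt+\sqrt{2\eta D}\,d\mathcal{W}$ is literally the same SDE before and after tax. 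No appeal to Theorem~\ref{thm:spectral-invariance} is needed.

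Separately, drop the attempt to preserve the direction of $\bar w$; it cannot work. From the first-order condition, $\bar w^{\text{pre}}=\gamma^{-1}\V^{-1}(\boldsymbol{\mu}-r_f\one_N)$ and $\bar w^{\text{post}}=\lambda\,\bar w^{\text{pre}}+\gamma^{-1}c\,\V^{-1}\one_N$. The $\V^{-1}\one_N$ term generically rotates $\bar w$ even when $\lambda=1$. So your fallback, that any discrepancy only rescales $\bar w$, fails as well.

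The paper makes no such claim. It places the tail exponent, the eigenportfolio directions and $r_{\text{eff}}$ in the stationary law of $\tilde W$. It reads the rank-one change of $\bar W$ as the uniform translation of the drift-shift symmetry. Your closing remark about leaving the fluctuation-based invariants intact is the right move; it should be the argument itself, not the fallback.
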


\begin{proof}
The proof uses a different mechanism than Theorem~\ref{thm:spectral-invariance}: the tax stack is a return-rescale (with curvature $\V$ unchanged), not a loss-rescale, so the form~\eqref{eq:loss-rescale} does not apply directly.
Instead, we exploit $\V$-preservation under $\mathcal{T}^{\mathrm{gen}}$ together with the row-uniform action implied by the homogeneous-return assumption.

Let $\mathcal{L}(W) = -\sum_{t,i} W_{ti}(\mu_i - r_f) + \tfrac{\gamma}{2}\sum_t W_{t\cdot}^\top \V\,W_{t\cdot}$ denote the mean--variance loss in excess-return form, with $\bar W$ the deterministic first-order condition of $\nabla_W\mathcal{L} = 0$ and $\tilde W = W - \bar W$ the stochastic fluctuation around it.
Under the SGD dynamics~\eqref{eq:olsen-sde} with isotropic noise of intensity~$D$, $\tilde W$ obeys
\[
  d\tilde W = -\eta\gamma\,\tilde W\,\V\,dt + \sqrt{2\eta D}\,d\mathcal{W}.
\]

By Theorem~1 of \citet{Froeseth2026F}, conditions C1--C3 (with $\alpha_0 = 0$ for the risk-free asset, i.e.\ the wealth tax exempts the risk-free leg) imply that the after-tax excess return on every asset~$i$ takes the form
\[
  R_i^{\text{ex,post}} = \lambda\,(\mu_i - r_f) + c, \qquad
  \lambda := (1-\tau_c)(1-\tau_d), \quad
  c := -(1-\tau_d)\tau_c r_f - \tw\alpha,
\]
where the additive constant~$c$ is the same for all assets~$i$ (by C3) and, under the homogeneous-return assumption, the same for all rows~$t$ of the allocation matrix.
Diffusion is preserved: $\V^{\text{post}} = \V^{\text{pre}}$, in line with the drift-shift symmetry of \citet{Froeseth2026S}.

The fluctuation SDE depends on $\V$ only, not on $\boldsymbol{\mu}$ or~$c$: both the linear-in-$\tilde W$ drift coefficient $-\eta\gamma\V$ and the noise $\sqrt{2\eta D}\,d\mathcal{W}$ are unchanged under the tax stack.
The SDE for $\tilde W$ is therefore identical pre and post-tax; the stationary distribution of fluctuations is invariant.

The deterministic FOC, computed from $\nabla_{W_{t\cdot}}\mathcal{L} = 0$ on the mean--variance loss with effective excess return $\boldsymbol{R}^{\text{ex,post}} = \lambda(\boldsymbol{\mu} - r_f\one) + c\,\one$, shifts uniformly across rows:
\[
  \bar W^{\text{post}} - \bar W^{\text{pre}}
    = \one_T \cdot \boldsymbol{\delta}^\top, \qquad
  \boldsymbol{\delta} = \frac{1}{\gamma}\,\V^{-1}\bigl((\lambda-1)(\boldsymbol{\mu} - r_f\one) + c\,\one_N\bigr),
\]
a rank-one matrix common to every row, corresponding to a uniform translation of log-wealth in the population.

Therefore: (i)~the stationary cross-sectional distribution of $\tilde W$ is invariant, so the Pareto tail exponent, eigenportfolio directions, and effective spectral rank of the stationary spectrum are preserved; (ii)~the cross-sectional mean shifts uniformly, in line with drift-shift symmetry.
The homogeneous-return assumption is essential: under heterogeneous returns (different $\boldsymbol{\mu}_t$ per row, as in \citet{Froeseth2026H}), the FOC shift is no longer row-uniform and cross-sectional shape need not be preserved.
\end{proof}

\section{Anisotropic Perturbations and Spectral Distortion}\label{sec:anisotropic}

When a perturbation affects different assets differently, the loss landscape is modified \emph{anisotropically}:
\begin{equation}\label{eq:aniso-perturbation}
  \frac{\partial \mathcal{L}}{\partial W_{ti}} \to \frac{\partial \mathcal{L}}{\partial W_{ti}} + \delta_i\,W_{ti},
\end{equation}
where $\delta_i$ is the asset-specific perturbation strength.
This is the regime \citet{OlsenEtAl2025} identify (Proposition 6.17) as breaking their isotropic analysis and producing distorted spectra.

The consequences are:
\begin{itemize}
  \item Eigenportfolio directions $v_k$ rotate (portfolio tilt toward less-perturbed assets).
  \item The repulsion structure changes (some factors compressed, others stretched).
  \item The tail exponent becomes direction-dependent.
\end{itemize}

The portfolio weight distortion, to first order in the perturbation, is
\begin{equation}\label{eq:weight-distortion}
  \Delta w^* = -\frac{1}{\gamma} V^{-1}(\delta - \bar{\delta} \cdot \mathbf{1}),
\end{equation}
where $\gamma$ is risk aversion, $V$ is the return covariance matrix, $\delta = (\delta_1, \ldots, \delta_N)$ is the vector of perturbation strengths, and $\bar{\delta}$ is their mean.

\begin{example}[Differential Taxation]
In the Norwegian wealth tax system, $\delta_i = \tau_w(1 - \alpha_i)$ where $\alpha_i$ is the assessment fraction for asset class $i$.
With $\alpha_{\text{housing}} = 0.25$, $\alpha_{\text{shares}} = 0.80$, $\alpha_{\text{deposits}} = 1.00$, the cross-asset dispersion of perturbation strengths $\operatorname{Var}(1 - \alpha_i) \approx 0.10$ drives portfolio tilt toward housing.
Post-2017 discount changes (which changed $\alpha_{\text{shares}}$ from 1.00 to 0.80 over several steps) should produce time-varying spectral distortions testable in SSB microdata.
\end{example}

\begin{example}[Sector-Specific Regulation]
A financial regulation imposing different capital requirements on different sectors (e.g.\ Basel risk weights) enters as $\delta_i = c_i$ where $c_i$ is the capital charge per unit of exposure to sector $i$.
The spectral distortion is proportional to $\operatorname{Var}(c_1, \ldots, c_N)$.
\end{example}

\begin{example}[Differential Transaction Costs]
Market microstructure costs that vary across asset classes (e.g.\ bid--ask spreads, brokerage fees) enter as $\delta_i = \kappa_i$ where $\kappa_i$ is the effective cost of trading asset $i$.
Illiquid assets (high $\kappa_i$) are underweighted relative to the frictionless optimum.
\end{example}

\section{The Spectral Invariance Conjecture}\label{sec:conjecture}

\begin{conjecture}[Spectral Invariance]\label{conj:sic}
Consider $N$ assets with returns observed over $T$ periods.
Let the allocation matrix $W \in \mathbb{R}^{T \times N}$ evolve under stochastic gradient descent on a portfolio objective $\mathcal{L}(W)$ with isotropic noise of intensity $D$.

\begin{enumerate}
  \item[\textup{(a)}]
    The stationary singular-value distribution has tail exponent $(T - N + 1)/2$, and the aggregate wealth distribution formed by $x = \|W\|_F$ has Pareto tail exponent $\alpha = 1 + v/D$ where $v$ is the mean drift of $\|W\|_F$.

  \item[\textup{(b)}]
    Any isotropic perturbation to $\mathcal{L}$ preserves the spectral shape: the singular-value distribution shifts in scale but does not change its tail exponent or eigenportfolio directions.

  \item[\textup{(c)}]
    Any anisotropic perturbation to $\mathcal{L}$ with asset-specific weights $\delta_i$ rotates the eigenportfolio directions and changes the tail exponent in a direction-dependent way.
    The magnitude of distortion is proportional to the cross-asset variance:
    \begin{equation}
      \textup{distortion} \propto \operatorname{Var}(\delta_1, \ldots, \delta_N).
    \end{equation}
\end{enumerate}
\end{conjecture}

\textbf{Status.}
Parts (a) and (b) follow from combining \citet{OlsenEtAl2025}'s theorems with the isotropic rescaling argument of Theorem~\ref{thm:spectral-invariance}.
Part (c) requires extending \citet{OlsenEtAl2025}'s analysis to the anisotropic case; their Proposition 6.17 is a starting point but does not give the full spectral distortion.

\begin{corollary}[Tax Neutrality]
Setting $\delta_i = \tau_w(1 - \alpha_i)$, part~(b) recovers the neutrality conditions C1--C3 of \citet{Froeseth2026N} and part~(c) quantifies the cost of violating condition C3.
\end{corollary}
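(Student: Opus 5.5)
The plan is to derive the corollary as a specialisation of Conjecture~\ref{conj:sic}, treating part~(b) as established by the isotropic rescaling argument of Theorem~\ref{thm:spectral-invariance} and the row-uniform mechanism of Corollary~\ref{cor:tax-neutral}. First, substitute the Norwegian wealth-tax perturbation into the anisotropic form~\eqref{eq:aniso-perturbation} with $\delta_i = \tau_w(1-\alpha_i)$. Then split $\delta$ into its mean and its deviation,
\[
\delta_i = \bar\delta + (\delta_i - \bar\delta), \qquad \bar\delta = \tau_w(1-\bar\alpha),
\]
so that the perturbation becomes a sum of an isotropic part and an anisotropic part.

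The isotropic part $\bar\delta\,W_{ti}$ is a common linear term in the gradient, and it can be handled in two ways. One is to absorb it into the restoring constant ($\beta_1 \to \beta_1 + \eta\bar\delta$-type shift), which rescales $p_\sigma$ exactly as in Step~2 of the proof of Theorem~\ref{thm:spectral-invariance}. The other is to use the return-rescale-plus-uniform-constant structure $R^{\text{ex,post}}_i = \lambda(\mu_i - r_f) + c$, which Corollary~\ref{cor:tax-neutral} shows leaves the fluctuation SDE $d\tilde W = -\eta\gamma\tilde W\V\,dt + \sqrt{2\eta D}\,d\mathcal W$ unchanged. Invoking C1 ($\tau_k=\tau_c$) and C2 ($\rho_s=r_f$) from \citet{Froeseth2026N} guarantees that the corporate, capital income and dividend layers contribute only the scalar factor $\lambda$ and an asset-independent constant. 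Hence, when C3 holds ($\alpha_i=\alpha$, so $\operatorname{Var}(\delta)=0$), part~(b) applies verbatim: tail exponent, eigenportfolio directions and $r_{\text{eff}}$ are invariant. This is the neutrality statement.

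For the converse direction, note that the anisotropic remainder $\delta - \bar\delta\one$ vanishes identically if and only if C3 holds. By~\eqref{eq:weight-distortion}, its first-order effect on optimal weights is $\Delta w^* = -\gamma^{-1}\V^{-1}(\delta - \bar\delta\one)$. Its squared size is therefore controlled by
\[
(\delta-\bar\delta\one)^\top \V^{-2}(\delta-\bar\delta\one) \le \lambda_{\min}(\V)^{-2}\, N\,\tau_w^2 \operatorname{Var}(\alpha_1,\dots,\alpha_N),
\]
with a matching lower bound in terms of $\lambda_{\max}(\V)^{-2}$. Part~(c) then converts this into a spectral distortion proportional to $\tau_w^2\operatorname{Var}(\alpha_i)$, which is the claimed cost of violating C3. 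For the rotation of the $v_k$, I would use first-order perturbation theory of the SVD: the rotation of $v_k$ toward $v_j$ is $\langle v_j, \operatorname{diag}(\delta-\bar\delta)\, v_k\rangle/(\sigma_k^2-\sigma_j^2)$ to leading order.

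The main obstacle is that part~(c) is itself only conjectural. The quantitative step, that the distortion of the tail exponent is proportional to $\operatorname{Var}(\delta)$ rather than merely bounded by it, rests on extending Proposition~6.17 of \citet{OlsenEtAl2025}. I would first try a second-order expansion of the stationary Dyson density in $\delta-\bar\delta\one$. The first-order term averages to zero over isotropic singular vectors, leaving a quadratic form whose isotropic average is exactly a multiple of $\operatorname{Var}(\delta)$. If this cannot be made rigorous, the corollary should be stated conditionally on part~(c). A secondary subtlety is the bookkeeping between the loss-rescale form~\eqref{eq:loss-rescale} and the return-rescale form used in Corollary~\ref{cor:tax-neutral}. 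I would rely on the latter, since it needs only $\V$-preservation and row uniformity.
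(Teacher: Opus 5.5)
\textbf{The paper gives no proof of this corollary.} It rests on the Status remark that (b) follows from Theorem~\ref{thm:spectral-invariance} and that (c) is open. Your decision to make the cost statement conditional on part~(c) therefore matches the paper.

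\textbf{The genuine gap is in the neutrality half.} It lies in the step ``uniform $\delta$ $\Rightarrow$ isotropic perturbation $\Rightarrow$ part~(b).'' Under \eqref{eq:aniso-perturbation}, the uniform part adds $\tfrac{\bar\delta}{2}\|W\|_F^2$ to the loss. That is weight decay, not $k\mathcal{L}+\text{const}$ as in \eqref{eq:loss-rescale}. It acts as a pure rescale only when the curvature is itself isotropic. The mean-field density \eqref{eq:stat-sv} is such a case, but its single scalar $\beta_1$ cannot represent the anisotropic part either, since $v_k^\top\operatorname{diag}(\delta)v_k$ averages to $\bar\delta$.

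In the exactly solvable mean--variance model of Corollary~\ref{cor:tax-neutral}, the perturbation gives $d\tilde W=-\eta\,\tilde W(\gamma\V+\operatorname{diag}\delta)\,dt+\sqrt{2\eta D}\,d\mathcal{W}$. Stationary rows are then Gaussian with covariance $D(\gamma\V+\operatorname{diag}\delta)^{-1}$. Under C3 this is $D(\gamma\V+\bar\delta I)^{-1}$, whose eigenvalue ratios $(\gamma v_j+\bar\delta)/(\gamma v_i+\bar\delta)$ differ from $v_j/v_i$. So the singular-value shape and $r_{\text{eff}}$ change even when C3 holds, unless $\V\propto I$. In addition, the first-order mean tilt $-\bar\delta\gamma^{-1}\V^{-1}w^*$ alters portfolio proportions unless $w^*$ is an eigenvector of $\V$.

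Your fallback through Corollary~\ref{cor:tax-neutral} does not repair this. There the wealth tax is a linear return shift, and the fluctuation SDE does not depend on the return vector at all. Invariance then holds for arbitrary $\alpha_i$, and for that matter without C1 or C2. That route would certify the C3-violating Norwegian system as neutral too, so it cannot be what singles out C3. Your sufficiency argument and your cost argument live in different models of the same tax, and in none of them does C3 emerge as the dividing line.

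\textbf{A consistent version.} Work entirely in the Gaussian fluctuation model with curvature $\gamma\V+\operatorname{diag}\delta$, and assume $\V=vI$. Then:
\begin{itemize}
\item the law of the singular values of $\tilde W$ depends only on the multiset $\{D/(\gamma v+\delta_i)\}$;
\item uniform $\delta$ is an exact rescale;
\item $\operatorname{Var}(\delta)=0$ iff C3, for $\tw>0$;
\item any smooth, scale-invariant spectral statistic changes by a constant times $\operatorname{Var}(\delta)$ plus $O(\|\delta-\bar\delta\one\|^3)$.
\end{itemize}
The last point holds because the statistic is permutation-symmetric in $\delta$. Its gradient at the symmetric point is therefore parallel to $\one$, and its Hessian has the form $aI+b\,\one\one^\top$. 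This is exactly the case in which your Haar-averaging heuristic is valid.

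For general $\V$ the heuristic breaks down. The stationary singular vectors are biased toward the eigenvectors of $\V$. The first-order term $\langle g_{\V},\delta-\bar\delta\one\rangle$, for a $\V$-dependent vector $g_{\V}$, no longer vanishes. Shape distortion is then linear, not quadratic, in the centred perturbation, so no proportionality to $\operatorname{Var}(\delta)$ can hold.

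\textbf{Two further points.}
\begin{itemize}
\item Your condition-number bounds control \eqref{eq:weight-distortion}. That is a first-order mean tilt which treats $\delta$ as a linear return shift, not a spectral distortion. For the $\delta_iW_{ti}$ term the tilt is actually $-\gamma^{-1}\V^{-1}\operatorname{diag}(\delta)w^*$.
\item $\delta_i=\tw(1-\alpha_i)$ contains none of $\tau_c,\tau_k,\tau_d,\rho_s$. The specialisation can therefore recover at most C3; C1 and C2 are imported from \citet{Froeseth2026F}, not recovered.
\end{itemize}
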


\textbf{Interpretation.}
The Conjecture provides a unified criterion for evaluating any environmental perturbation---tax, regulation, transaction cost, or fee structure---in terms of its spectral signature.
Part~(a) connects spectral and Pareto exponents; part~(b) characterises benign (isotropic) perturbations; part~(c) quantifies distortion from differential treatment.
The free-convolution framework of \citet{BousseyroBouchaud2025}---which
interprets the additivity of R-transforms as a generalised Dyson
evolution---provides a natural algebraic setting for attacking parts~(a)
and~(b), since the spectral density of the allocation matrix at finite
horizon is a free convolution of the initial condition with the noise
contribution.

\section{The Loss Function and Utility Correspondence}\label{sec:loss}

The loss function $\mathcal{L}(W)$ encodes the portfolio objective.
Different loss functions (MLE, score matching, denoising score matching) encode different utility assumptions.

\subsection{Maximum Likelihood Estimation}

For maximum likelihood estimation of a drift function, the loss is
\begin{equation}
  \mathcal{L}_{\text{MLE}}(W) = -\frac{1}{T}\sum_{t=1}^{T} \log p(x_{t+1} \mid x_t; W).
\end{equation}
Under the SDE model $dx = \hat{v}(x; W)\,dt + \sqrt{2D}\,dB$, the transition density is approximately Gaussian:
\begin{equation}\label{eq:mle-loss}
  \mathcal{L}_{\text{MLE}}(W) = \frac{1}{4DT\Delta t}\sum_t \bigl(x_{t+1} - x_t - \hat{v}(x_t; W)\,\Delta t\bigr)^2 + \text{const}.
\end{equation}
This is a \textbf{quadratic loss in the residuals}---a mean-squared-error objective.
Minimizing this is equivalent to maximizing expected utility in a quadratic (CRRA with $\gamma = 2$) utility framework.

\subsection{Score Matching and Denoising Score Matching}

Score matching \citep{Hyvarinen2005} minimizes
\begin{equation}
  \mathcal{L}_{\text{SM}}(\theta) = \mathbb{E}_{x \sim p_{\text{data}}} \bigl[\tfrac{1}{2}\|s_\theta(x) - \nabla_x \log p_{\text{data}}(x)\|^2\bigr],
\end{equation}
where $s_\theta$ is the parametric score function.
At steady state, $v(x) = D \cdot s(x)$, so learning the score is equivalent to learning the drift.

Denoising score matching (DSM) adds Gaussian noise $\tilde{x} = x + \sigma\epsilon$ and regresses $s_\theta(\tilde{x})$ onto $-\epsilon/\sigma$.
The noise structure is isotropic by construction, which is exactly the regime where \citet{OlsenEtAl2025}'s spectral theory applies cleanly.

Key observation: score matching loss is quadratic in the score residual, just as MLE loss is quadratic in the drift residual.
The SGD dynamics on the network weights therefore have the same mathematical structure regardless of whether we use MLE or score matching.

\begin{proposition}[Loss--Utility Correspondence]\label{prop:loss-utility}
Let $\mathcal{L}(W) = \frac{1}{T}\sum_t \|r_t(W)\|^2$ be any loss function that is quadratic in a residual vector $r_t$ (encompassing MLE~\eqref{eq:mle-loss}, score matching, and denoising score matching).
If the residuals satisfy $\mathbb{E}[r_t r_t^\top] = \Sigma_r$ with $\Sigma_r$ full rank, then:
\begin{enumerate}
  \item The SGD gradient covariance takes the form $C = \frac{4}{T}\sum_t (r_t r_t^\top) \otimes (x_t x_t^\top)$, which has the Kronecker structure required by \citet{OlsenEtAl2025}.
  \item The stationary spectral density of $W$ falls in the universality class characterised by the free log-normal~\eqref{eq:free-lognormal} with parameters determined by the signal-to-noise ratio $\|v_{\text{true}}\|^2 / D$.
  \item The spectral tail exponent encodes the Pareto exponent of the target distribution via~\eqref{eq:pareto-spectral}.
\end{enumerate}
\end{proposition}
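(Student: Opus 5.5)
The plan is to treat the three claims of Proposition~\ref{prop:loss-utility} in order, since each feeds the next. The working model is a single linear layer with prediction $\hat v(x_t;W) = W^\top x_t$, so that $r_t(W) = y_t - W^\top x_t$. For deeper networks the same computation applies layer by layer, with $x_t$ replaced by the layer's input activations.

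For part~1 I will compute the per-sample gradient directly. From $\ell_t(W) = \|r_t\|^2$ we get $\nabla_W \ell_t = -2\,x_t r_t^\top$. Vectorising with $\operatorname{vec}(x_t r_t^\top) = r_t \otimes x_t$, the outer product of the stochastic gradient with itself is $4\,(r_t\otimes x_t)(r_t\otimes x_t)^\top = 4\,(r_t r_t^\top)\otimes(x_t x_t^\top)$. This uses the mixed-product identity for Kronecker products. Averaging over the minibatch index gives $C = \frac{4}{T}\sum_t (r_t r_t^\top)\otimes(x_t x_t^\top)$. The true covariance also subtracts the outer product of the mean gradient, but that term vanishes at stationarity ($\nabla_W\mathcal{L}=0$), so I will handle it that way.

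The substantive point in part~1 is that this sum is a Kronecker \emph{product} $\Sigma_r\otimes\Sigma_x$, rather than merely a sum of Kronecker products. Here I will assume that the residual $r_t$ and the input $x_t$ are uncorrelated at second order. For well-specified drift estimation (MLE or DSM) this holds because the residual is the innovation noise $\sqrt{2D}\,\Delta B_t/\Delta t$, which is independent of $x_t$. Under this assumption the expectation factorises as $\mathbb{E}[r_t r_t^\top]\otimes\mathbb{E}[x_t x_t^\top] = \Sigma_r\otimes\Sigma_x$, and the full-rank hypothesis on $\Sigma_r$ makes the noise nondegenerate. This is exactly the structure required by \citet{OlsenEtAl2025}. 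After whitening, $W\mapsto \Sigma_x^{1/2}W\Sigma_r^{1/2}$, the noise becomes isotropic with intensity $D$, and the singular-value SDE~\eqref{eq:sv-sde} applies.

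For part~2 I will use the stationary law~\eqref{eq:stat-sv} on short horizons and then the multiplicative dynamics. The restoring constant $\beta_1$ arises from the Hessian of the quadratic loss, which is $\Sigma_x$ and scales with the signal $\|v_{\text{true}}\|^2$. The noise scale is $\eta D$. Their ratio therefore fixes the gamma-type bulk. Composing updates over many steps turns the additive DBM into the multiplicative DBM~\eqref{eq:mult-dbm}, whose law at any time is the free log-normal~\eqref{eq:free-lognormal}. The parameters $a$ and $b$ will be identified with drift over noise and noise per step respectively, and both depend on the data only through $\|v_{\text{true}}\|^2/D$. Part~3 then follows by feeding this spectral density into the radial projection of Section~\ref{sec:aggregation}, giving~\eqref{eq:pareto-spectral}. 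I will check the result against the $p=1$ and large-$N$ limits, \eqref{eq:pareto-single} and \eqref{eq:pareto-largeN}.

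I expect the main obstacle to be the factorisation step in part~1, together with the passage in part~2 from additive stationarity to the free log-normal class. For score matching and misspecified models, $r_t$ depends on $x_t$, and the decorrelation assumption fails. My first approach there is a mean-field argument: replace $r_t r_t^\top$ by its conditional expectation given $x_t$, and show that the correction is a lower-order, non-Kronecker perturbation of the kind Proposition~6.17 of \citet{OlsenEtAl2025} treats as a controlled anisotropy. If that fails, part~2 must be stated only as membership in the universality class, justified heuristically through the free multiplicative central limit theorem.
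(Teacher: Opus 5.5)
For parts~1 and~3 you follow the paper's route. You use a linear layer, the per-sample gradient $-2x_tr_t^\top$ and its Kronecker second moment, Theorem~3.2 of \citet{OlsenEtAl2025}, and then the radial projection into~\eqref{eq:pareto-spectral}, applied layer by layer for deep networks (where the paper appeals to KFAC). Your part~1 is more careful than the paper's sketch, which simply asserts $C = 4\Sigma_r\otimes\Sigma_x/T$. You rightly note that $\frac{4}{T}\sum_t (r_tr_t^\top)\otimes(x_tx_t^\top)$ is only a sum of Kronecker products. State the hypothesis you need as the fourth-moment identity $\mathbb{E}[(r_tr_t^\top)\otimes(x_tx_t^\top)] = \Sigma_r\otimes\Sigma_x$. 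Your independent-innovations argument supplies this, but $\mathbb{E}[r_tx_t^\top]=0$ alone would not. Also, the map that isotropises noise of covariance $\Sigma_r\otimes\Sigma_x$ is $W\mapsto\Sigma_x^{-1/2}W\Sigma_r^{-1/2}$. Being non-orthogonal, it changes singular values, so~\eqref{eq:sv-sde} would then describe the whitened matrix, not $W$.

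The genuine gap is in part~2. Composing SGD updates cannot turn additive Dyson motion into the multiplicative one~\eqref{eq:mult-dbm}. The increments in~\eqref{eq:olsen-sde}, and in your own noise model, add to $W$. Nothing multiplies $W$ by near-identity random factors, so the free multiplicative CLT has nothing to act on.

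In your linear set-up you can solve the stationary law outright. Take $y_t = W^{*\top}x_t+\epsilon_t$. Then $Y=W-W^*$ is a matrix Ornstein--Uhlenbeck process with drift $-2\eta\Sigma_xY$ and noise covariance proportional to $\Sigma_r\otimes\Sigma_x$. Its Lyapunov equation is solved by $\operatorname{Cov}(\operatorname{vec}Y)\propto\Sigma_r\otimes I_m$, i.e.\ $W=W^*+c\,G\,\Sigma_r^{1/2}$ with $G$ i.i.d.\ Gaussian. That is a spiked Wishart/Marchenko--Pastur ensemble, not a free log-normal, and the same holds under the paper's isotropic noise. This agrees with Section~\ref{subsec:transition}, which places within-layer SGD matrices in the additive regime.

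The only multiplicative ingredient in SGD is the minibatch Hessian fluctuation $(x_tx_t^\top-\Sigma_x)(W-W^*)$, and your part-1 factorisation discards exactly that term. Keeping it gives a Kesten-type affine recursion $Y_{k+1}=(I-2\eta\hat\Sigma_k)Y_k+\xi_k$, with $\hat\Sigma_k$ the minibatch second moment. Its stationary law is a sum of products, not the free log-normal law of a single product.

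The paper's sketch never attempts this passage. It stops at Olsen's gamma-type stationary density and feeds that into~\eqref{eq:pareto-spectral}, so the free log-normal form of part~2 is not derived there either. Your heuristic fallback via the free multiplicative CLT fails for the same reason.

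Finally, the Hessian $2\Sigma_x$ is the second moment of the inputs, so it does not generally scale with $\|v_{\text{true}}\|^2$. For an OU process $dx=-\kappa x\,dt+\sqrt{2D}\,dB$ one has $\Sigma_x=D^2/\|v_{\text{true}}\|^2$, i.e.\ inverse scaling. Dependence on $\|v_{\text{true}}\|^2/D$ alone therefore needs a specific data model.
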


\begin{proof}[Proof sketch]
For a linear network $\hat{v}(x; W) = Wx$, the gradient is $\nabla_W \mathcal{L} = -\frac{2}{T}\sum_t r_t\,x_t^\top$, and the gradient covariance factors as $C = 4\,\Sigma_r \otimes \Sigma_x / T$.
This Kronecker structure is the input assumption of \citet{OlsenEtAl2025}, Theorem~3.2 (spectral characterisation of SGD stationary distributions).
The resulting spectral density of $W$ follows a gamma-type distribution with tail exponent $(m - n + 1)/2$, which maps to a Pareto exponent $\alpha = 1 + f(m, n, \beta_1/\eta D)$ via~\eqref{eq:pareto-spectral}.

For a nonlinear network, the gradient is $\nabla_W \mathcal{L} = -\frac{2}{T}\sum_t r_t\,(\partial\hat{v}/\partial W)^\top$.
The Jacobian $\partial\hat{v}/\partial W$ introduces input-dependent curvature.
However, at each layer, the effective gradient covariance retains the Kronecker structure $C_l \approx \Sigma_{r,l} \otimes \Sigma_{a,l}$, where $\Sigma_{a,l}$ is the covariance of layer-$l$ activations.
This per-layer factorisation is empirically well-established and is the basis for natural gradient methods (KFAC).
Under this factorisation, the spectral analysis of \citet{OlsenEtAl2025} applies layer-by-layer.
\end{proof}

\begin{remark}
The remaining gap for a fully rigorous result is the per-layer Kronecker approximation $C_l \approx \Sigma_{r,l} \otimes \Sigma_{a,l}$, which holds exactly for linear networks and approximately for networks with smooth activations in the large-width regime.
Establishing this rigorously for finite-width networks with standard activations (ReLU, GELU) remains open.
\end{remark}

\section{Applications}\label{sec:applications}

The spectral portfolio framework applies to several domains.
We develop four here, emphasising that the general results of Part~III specialise differently in each context.

\subsection{Portfolio Design and Factor Structure}\label{subsec:app-portfolio}

The core--satellite structure of the stationary spectral distribution (Definition~\ref{def:core-satellite}) provides a principled decomposition of any portfolio.
The bulk of the singular-value distribution corresponds to broad, diversified market exposure; the power-law tail corresponds to concentrated factor bets.
The effective spectral rank $r_{\text{eff}}$ (Definition~\ref{def:eff-rank}) measures portfolio complexity and is directly computable from allocation data.

For portfolio construction, the three forces (Section~\ref{sec:forces}) provide new micro-foundations: the gradient signal identifies high-return factors, the survival constraint prevents premature abandonment of underperforming positions, and eigenvalue repulsion enforces diversification without explicit constraints.
In deep networks with $L$ layers, the hierarchical SVD decomposes the allocation into primary factors, combinations of factors, and final output---a fund-of-funds architecture.

\subsection{Wealth Inequality and Power Laws}\label{subsec:app-inequality}

The aggregation result (Section~\ref{sec:aggregation}) establishes that the Pareto exponent of the wealth distribution is determined by the spectral exponent of the allocation matrix via the radial projection $\alpha = 1 + f(m, n, \beta_1/\eta D)$.
The Bouchaud--Mézard unification (Section~\ref{sec:bouchaud}) shows that the same eigenvalue repulsion mechanism that prevents portfolio concentration also prevents wealth condensation.
The matrix Kesten problem (Section~\ref{subsec:transition}) provides the explicit link: the inverse-Wishart spectrum of the multiplicative regime maps to the Pareto tail of the wealth distribution under $\lambda \to 1/\lambda$.

These results connect the observable spectral structure of portfolio allocations to the shape of the wealth distribution, offering a new diagnostic for inequality: the spectral tail exponent is directly measurable from microdata without parametric assumptions about the wealth distribution itself.

\subsection{Tax Policy: Neutrality as Spectral Invariance}\label{subsec:app-tax}

The Spectral Invariance Theorem (Theorem~\ref{thm:spectral-invariance}) has a direct application to wealth tax design.
The identification between weight matrices and portfolio allocations enables a dictionary between the domains:

\begin{table}[H]
\centering
\small
\begin{tabular}{@{}lll@{}}
\toprule
\textbf{SGD (Olsen et al.)} & \textbf{Portfolio} & \textbf{Tax Context} \\
\midrule
Weight matrix $W$ & Allocation matrix & Wealth composition \\
Loss function $\mathcal{L}(W)$ & Negative expected utility & Tax-adjusted returns \\
Learning rate $\eta$ & Rebalancing speed & Adjustment speed to tax \\
Diffusion constant $D$ & Information noise & Market microstructure \\
Singular values $\sigma_k$ & Factor concentrations & Sector concentration \\
Repulsion term & Diversification pressure & Tendency away from concentration \\
\bottomrule
\end{tabular}
\caption{Dictionary linking SGD weight dynamics, portfolio optimization, and tax design.}
\label{tab:dictionary}
\end{table}

Under conditions C1--C3 of \citet{Froeseth2026F}---$\tau_k = \tau_c$ (C1), $\rho_s = r_f$ (C2), and uniform assessment $\alpha_i = \alpha$ (C3)---and under the homogeneous-return assumption (a common drift vector~$\boldsymbol{\mu}$ and covariance~$\V$ across rows), the after-tax excess return on every asset is $R_i^{\text{ex,post}} = \lambda(\mu_i - r_f) + c$ with $\lambda = (1-\tau_c)(1-\tau_d)$ and a row-uniform additive shift~$c$.
By Corollary~\ref{cor:tax-neutral}, such a tax preserves the cross-sectional spectral structure: the Pareto exponent of the wealth distribution, the eigenportfolio directions, and the effective spectral rank are invariant; only the cross-sectional mean shifts.

When C3 is violated, the Norwegian system provides a concrete test case.
With $\alpha_{\text{housing}} = 0.25$, $\alpha_{\text{shares}} = 0.80$, $\alpha_{\text{deposits}} = 1.00$, the cross-asset dispersion of perturbation strengths $\operatorname{Var}(1 - \alpha_i) \approx 0.10$ drives spectral distortion aligned with the assessment differentials.
The post-2017 changes to $\alpha_{\text{shares}}$ should produce time-varying spectral shifts testable in wealth register data.

\subsection{Neural Network Diagnostics}\label{subsec:app-nn}

The spectral theory predicts that the singular-value distribution of trained weight matrices should reflect the complexity of the target distribution.
For neural networks trained on stochastic processes---such as neural SDEs learning drift or score functions---the weight spectrum serves as a convergence diagnostic: deviation from the predicted spectral shape indicates incomplete learning.

Denoising score matching should produce weight matrices with cleaner spectral structure than MLE training, because DSM's noise is isotropic by construction and thus falls squarely within the universality class of \citet{OlsenEtAl2025}.
This prediction is testable by comparing weight spectra across training objectives on the same data.

\section{Testable Predictions}\label{sec:predictions}

The framework yields predictions at three levels.

\subsection{General Spectral Predictions}

\begin{enumerate}
  \item \textbf{Spectral tail exponent from portfolio data.}
    Compute the SVD of wealth portfolios (e.g.\ from Norwegian SSB microdata).
    The tail exponent of the singular-value distribution should be $(T - N + 1)/2$ where $T$ is the observation window and $N$ the number of asset classes.

  \item \textbf{Cross-sectional variation by investor horizon.}
    Long-horizon investors (low effective $D$) should have fatter spectral tails (more concentrated factor exposures) than short-horizon investors.
    Testable by segmenting wealth-holders by holding period.

  \item \textbf{Weight spectrum of trained networks.}
    After training a neural network on wealth or return data, the weight spectrum should reflect the Pareto exponent of the target distribution.
    Deviation signals incomplete convergence.

  \item \textbf{DSM vs.\ MLE spectral comparison.}
    Denoising score matching should produce weight matrices with cleaner spectral structure (closer to \citet{OlsenEtAl2025}'s predictions) than MLE training, because DSM's noise is isotropic by construction.

  \item \textbf{Computational cost vs.\ spectral complexity.}
    Training networks on data with high intrinsic dimensionality (fat spectral tails, large $r_{\text{eff}}$) requires more computation.
    Testable by comparing training time across datasets with known spectral structure.

  \item \textbf{Finite-width effects.}
    \citet{OlsenEtAl2025}'s results hold in the large-$r$ limit.
    The predicted spectral exponent should hold approximately for finite-width networks, with deviations decreasing as width increases.
\end{enumerate}

\subsection{Tax-Specific Predictions}

\begin{enumerate}
  \setcounter{enumi}{6}
  \item \textbf{Spectral shift around Norwegian tax reforms.}
    The introduction and variation of the \textit{verdsettelsesrabatt} (valuation discount) changed $\alpha_{\text{shares}}$ from 1.00 to 0.80 over several steps.
    The anisotropic perturbation should produce measurable rotation of eigenportfolio directions and direction-dependent tail distortion.

  \item \textbf{Distortion proportional to $\operatorname{Var}(\alpha_i)$.}
    The magnitude of portfolio distortion should be proportional to the cross-asset variance of assessment fractions.
    Norway's system ($\operatorname{Var} \approx 0.07$) provides a quantitative benchmark.
\end{enumerate}

\section{Open Questions and Future Work}\label{sec:open}

Several important questions remain open:

\begin{enumerate}
  \item \textbf{Anisotropic extension.}
    Part (c) of Conjecture~\ref{conj:sic} requires extending \citet{OlsenEtAl2025}'s analysis to anisotropic noise.
    Their Proposition 6.17 begins this, but the full spectral distortion (eigenportfolio rotation, tail exponent change) remains to be derived.

  \item \textbf{Finite-width corrections.}
    \citet{OlsenEtAl2025}'s results are asymptotic.
    What are the correction terms for finite-width networks?
    How large do $m, n$ need to be for the predictions to hold?

  \item \textbf{Budget constraints.}
    In practice, investors face constraints ($\sum_i w_i = 1$).
    The unconstrained Olsen SDE does not enforce this.
    How do constraints modify the spectral evolution?

  \item \textbf{Nonlinear activations.}
    We assume linear output (network is a linear projection of hidden units).
    How do nonlinear final layers (softmax, sigmoid) change the spectral dynamics?

  \item \textbf{Compute $f$ for intermediate regimes.}
    The single-factor~\eqref{eq:pareto-single} and large-$N$~\eqref{eq:pareto-largeN} limits are now explicit.
    Extend to finite $p > 1$ with eigenvalue repulsion corrections, and verify numerically against simulated spectra.

  \item \textbf{Policy optimization.}
    Given a target Pareto exponent (e.g., $\alpha = 2$ as observed in Norway), what perturbation structure minimizes spectral distortion while achieving the desired distributional shift?
    This inverse problem applies to tax design, regulation, and fee structures alike.
\end{enumerate}

\section{Relation to Existing Literature}\label{sec:literature}

We position this work relative to six major research areas.

\subsection{Random Matrix Theory in Finance}

\citet{LalouxEtAl1999} and \citet{LedoitWolf2004} apply RMT to estimate asset covariance matrices.
Our contribution: RMT also describes the structure of portfolio weight matrices (not just return covariances), and this structure is directly observable in trained neural networks.
The connection to Pareto wealth distributions is new.

\subsection{Deep Learning and Portfolio Optimization}

\citet{HeatonPolsonWitte2017, ZhangZohrenRoberts2021, BuehlerEtAl2019} develop neural network methods for portfolio optimization.
Our framework provides a spectral lens on why these networks learn specific factor structures, and predicts their spectra from the data distribution.

\subsection{Ergodicity Economics}

\citet{Peters2019, PetersGellMann2016, BermanPetersAdamou2020} distinguish ensemble and time averages in wealth dynamics.
Our contribution: the spectral gap (Section~\ref{sec:ergodicity}) is a measurable proxy for non-ergodicity, and spectral-preserving perturbations minimize it.

\subsection{SGD Spectral Theory}

\citet{MartinMahoney2021, PenningtonWorah2017, GunasekarEtAl2018} develop spectral theory of SGD in deep learning.
\citet{OlsenEtAl2025} extend this to matrix-valued weight evolution.
Our contribution: we show this theory has direct portfolio and wealth-distribution interpretations.

\subsection{Classical Portfolio Theory}

\citet{Merton1969, Merton1971} define optimal portfolio choice via mean--variance analysis.
\citet{Cover1991} develops universal portfolio theory.
Our framework extends these: the spectral decomposition reveals factor structure emerging from learning data, and spectral invariance provides a new rationale for neutral policy design.

\subsection{Wealth Dynamics and Power Laws}

\citet{DragulescuYakovenko2000, Yakovenko2009, GabaixKoijen2022} study mechanisms generating power-law wealth distributions.
\citet{Kesten1973} develops the Kesten recursion, central to our multiplicative regime.
Our contribution: we connect these scalar mechanisms to the matrix spectral theory via the matrix Kesten problem and the free log-normal interpolation.

\section{Conclusion}\label{sec:conclusion}

This paper establishes spectral portfolio theory as a unified framework connecting random matrix theory, portfolio dynamics, and wealth distribution.

The core results are:

\begin{enumerate}
  \item Neural network weight matrices are portfolio allocation matrices, and their spectral structure encodes factor decompositions faithful to the underlying data.

  \item The three forces in the singular-value evolution---gradient signal, survival constraint, and eigenvalue repulsion---translate directly into portfolio economics: smart money, information-driven survival, and endogenous diversification.

  \item The stationary spectral distribution exhibits a core--satellite structure (bulk plus power-law tail) that matches empirical wealth and institutional portfolio data.

  \item The spectral transition from additive (Marchenko--Pastur) to multiplicative (inverse-Wishart) regimes, mediated by the free log-normal, explains how daily return statistics relate to long-run wealth concentration.

  \item The Bouchaud--Mézard cross-sectional model, the Olsen within-portfolio model, and the scalar Fokker--Planck framework all emerge from the same underlying spectral dynamics, unified via common power-law mechanisms.

  \item The Spectral Invariance Theorem establishes that isotropic perturbations preserve spectral shape, while anisotropic perturbations produce distortion proportional to their cross-asset variance.
    This general result specialises to tax neutrality, regulatory impact assessment, and transaction cost analysis.
\end{enumerate}

The framework is testable: portfolio SVDs can be computed from microdata, spectral exponents can be compared against wealth Pareto exponents, and spectral shifts around policy reforms can be measured directly.
Applications span portfolio construction, inequality measurement, tax design, and neural network diagnostics.

Future work should: (1) extend the theory to anisotropic noise; (2) derive finite-width corrections; (3) handle budget constraints and nonlinear activations; (4) explicitly compute the spectral-to-Pareto mapping; and (5) solve the inverse problem of perturbation design---choosing the structure that achieves a distributional target while minimizing spectral distortion.

\subsection*{Acknowledgements}
The author acknowledges the use of Claude (Anthropic) for assistance with
literature review, \LaTeX{} typesetting, mathematical exposition, and
editorial refinement, and Lemma (Axiomatic AI) for review and proof
checking. All substantive arguments, economic reasoning, and conclusions
are the author's own.

\bibliographystyle{plainnat}

\end{document}